\documentclass[11pt,a4paper]{article}

\usepackage{amsmath,amsfonts,amssymb,amsthm}
\usepackage[utf8]{inputenc}
\usepackage[hang,center,nooneline]{subfigure}
\usepackage{a4wide,natbib,color,graphicx,enumerate,url,hyperref,cancel,paralist,dsfont,epstopdf,accents}
\usepackage[titletoc, title]{appendix}
\usepackage{subfigure}

\usepackage{authblk}

\usepackage{caption}

\definecolor{mblue}{rgb}{0,0.4470,0.7410}
\definecolor{morange}{rgb}{0.8500,0.3250,0.0980}
\definecolor{myellow}{rgb}{0.9290,0.6940,0.1250}
\definecolor{mgreen}{rgb}{0.25,0.5,0.25}

\hypersetup{pdfborder={0 0 0},colorlinks=true,linkcolor=mblue,urlcolor=mblue,citecolor=mblue}

\newtheorem{theorem}{Theorem}[section]

\newtheorem{proposition}[theorem]{Proposition} 
\newtheorem{lemma}[theorem]{Lemma}
\newtheorem{remark}[theorem]{Remark}
\theoremstyle{remark}

\numberwithin{equation}{section}

\newcommand{\eps}{\varepsilon}
\newcommand{\indic}{\mathds{1}}
\newcommand{\reals}{\mathbb{R}}
\newcommand{\ints}{\mathbb{Z}}
\newcommand{\s}{\mathbb{S}}
\newcommand{\ms}{\mathbb{M}}
\newcommand{\bd}{\mathbb{F}}
\newcommand{\prob}{\operatorname{P}}
\newcommand{\expec}{\operatorname{E}}

\newcommand{\tv}{\operatorname{TV}}

\renewcommand{\ge}{\geqslant}
\renewcommand{\le}{\leqslant}
\renewcommand{\geq}{\geqslant}
\renewcommand{\leq}{\leqslant}
\newcommand{\vcirc}{\accentset{\circ}{v}}
\newcommand{\dcirc}{\accentset{\circ}{\Delta}}

\newcommand{\KL}{\operatorname{KL}}

\newcommand{\diam}{\operatorname{diam}}
\newcommand{\KLball}{\operatorname{B_{KL}}}
\newcommand{\Lball}{\operatorname{B}_{1}}
\newcommand{\Sball}{\operatorname{B}_{\infty}}

\newcommand\blfootnote[1]{%
  \begingroup
  \renewcommand\thefootnote{}\footnote{#1}%
  \addtocounter{footnote}{-1}%
  \endgroup
}

\begin{document}

\title{Bayesian Nonparametrics for Directional Statistics}

\title{Bayesian Nonparametrics for Directional Statistics}
\author{Olivier Binette}
\author{Simon Guillotte}
\affil[]{Université du Québec à Montréal}

\maketitle

\blfootnote{Email: \texttt{binette.olivier@courrier.uqam.ca}}

\begin{abstract}
We introduce a density basis of the trigonometric polynomials that is suitable to mixture modelling. Statistical and geometric properties are derived, suggesting it as a circular analogue to the Bernstein polynomial densities. Nonparametric priors are constructed using this basis and a simulation study shows that the use of the resulting Bayes estimator may provide gains over comparable circular density estimators previously suggested in the literature.

From a theoretical point of view, we propose a general prior specification framework for density estimation on compact metric space using sieve priors. This is tailored to density bases such as the one considered herein and may also be used to exploit their particular shape-preserving properties. Furthermore, strong posterior consistency is shown to hold under notably weak regularity assumptions and adaptative convergence rates are obtained in terms of the approximation properties of positive linear operators generating our models. 
\end{abstract}

\section{Introduction}
\label{S:Intro}

There is increasing interest in the statistical analysis of non-euclidean data, such as data lying on a circle, on a sphere or on a more complex manifold or metric space. Applications range from the analysis of seasonal and angular measurements to the statistics of shapes and configurations \citep{RS01, BB12}. In bioinformatics, for instance, an important problem is that of using the chemical composition of a protein to predict the conformational angles of its backbone \citep{Allazikani2001}. Bayesian nonparametric methods, accounting for the wrapping of angular data, have been successfully applied in this context \citep{Lennox2009, Lennox2010}.

Directional statistics deals in particular with univariate angular data and provides basic building blocks for more complex models.
Among the most commonly used model for the probability density function of a circular random variable is the von Mises density defined by 
$$
    u \mapsto \exp(\kappa \cos(u - \mu)) / (2\pi I_0(\kappa)),
$$
where $\mu$ is the circular mean, $\kappa  > 0$ is a shape parameter and $I_0$ is the modified Bessel function of the first kind and order $0$. This function is nonnegative, $2\pi$-periodic and integrates to one on the interval $[0, 2\pi)$. It can be regarded a circular analogue to normal distribution \citep{RS01} (see also \cite{CLB12} for a comparison with the geodesic normal distribution). Mixtures of von Mises densities and other log-trigonometric densities are also frequently used \citep{Kent83}. Another natural approach is to model circular densities using trigonometric polynomials
\begin{equation}\label{eq:trig_poly}
    u \mapsto \frac{1}{2\pi} + \sum_{k=1}^{n} (a_k \cos(ku) + b_k \sin(ku)). 
\end{equation}
These densities have tractable normalizing constants, but the coefficients $a_k$ and $b_k$ must be constrained as to ensure nonnegativity \citep{Fejer1915, FD04}. 

For a review of common circular distributions, see \cite{MJ00, RS01}. Notable Bayesian approaches to directional statistics problems include \cite{Ghosh:2003, MM08, Ravindran2011, Hernandez2017}.

In this paper, we introduce a basis of the trigonometric polynomials \eqref{eq:trig_poly} consisting only of probability density functions. Properties shown in Section \ref{S:Model}, such as its shape-preserving properties, suggest it as a circular analogue to the Bernstein polynomial densities and we argue that it is particularly well suited to mixture modelling. In Section \ref{S:Prior}, we use this basis to devise nonparametric priors on the space of bounded circular densities. We compare their posterior mean estimates to other density estimation methods based on the usual trigonometric representation \eqref{eq:trig_poly} in Section \ref{sec:comparison}. 

An important aspect of nonparametric prior specification is the posterior consistency property, which entails almost sure convergence (in an appropriate topology) of the posterior mean estimate. In Section \ref{sec:strong_consistency}, we thus develop a general prior specification framework that immediately provides consistency of a class of sieve priors for density estimation on compact metric spaces. Particular instances of this framework appeared previously in the literature. For instance, \cite{Petrone2002} obtained consistency of the Bernstein-Dirichlet prior on the set of continuous densities on the interval $[0,1]$. More recently \cite{Xing2009} (see also \cite{walker2004, Lijoi05}) have obtained a simple condition for models of this kind ensuring consistency on the Kullback-Leibler support of the prior. As an application, they quickly revisit the problem of \cite{Petrone2002} but without discussing what contains the Kullback-Leibler support. Our main contribution here is the proof that the Kullback-Leibler support of the priors specified in our framework contains every bounded density. Furthermore, we show in Section \ref{sec:adaptative} how our framework may be used to obtain posterior contraction rates. The results are related to those of \cite{Ghosal2001, Kruijer2008} in the case of the Bernstein-Dirichlet prior but are stated with more generality. They express posterior contraction rates in terms of a balance between the dimension of the sieves and their approximation properties, as they are accounted for by a sequence of positive linear approximation operators.

\section{\textit{De la Vallée Poussin} mixtures for circular densities}
\label{S:Model}
\subsection{The basis}
We propose the basis $\mathcal{B}_n$ for $2\pi$-periodic densities of circular random variables given by
\begin{equation}
 \label{eq:basis}
C_{j,n}(u)=\frac{2^{2n}}{2\pi \binom{2n}{n}}\left(\frac{1+\cos\left(u-\frac{2\pi 
j}{2n+1}\right)}{2}\right)^n, \quad u \in \reals, \quad j=0,\ldots,2n,
\end{equation}


\begin{figure}[h]
  \centering
  \begin{minipage}[b]{0.45\textwidth}
    \includegraphics[width=\textwidth]{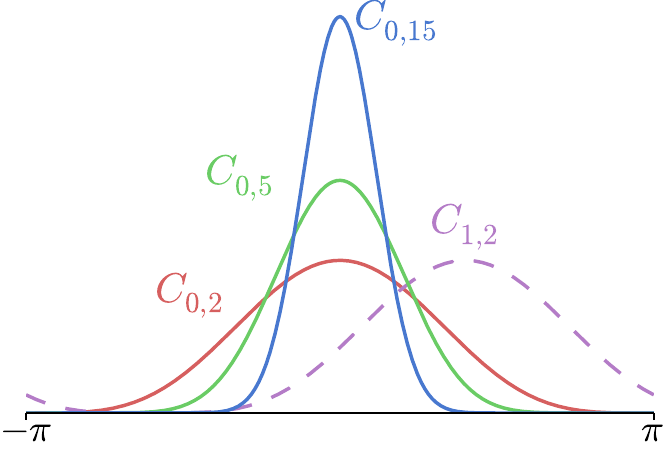}
  \end{minipage}
  \hfill
  \begin{minipage}[b]{0.45\textwidth}
    \includegraphics[width=\textwidth]{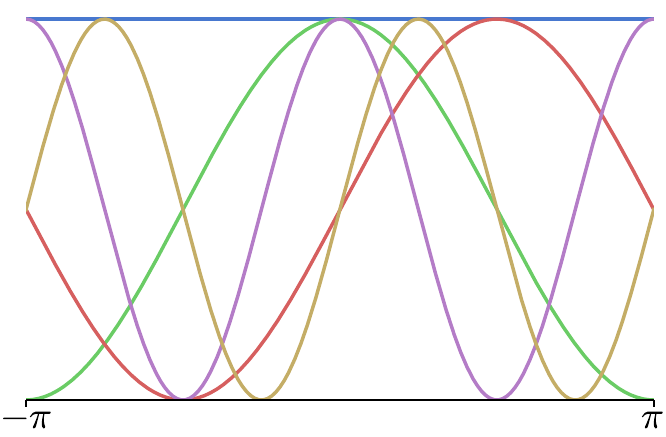}
  \end{minipage}
  \caption{Comparison between De la Vallée Poussin basis densities (left) and the usual trigonometric basis $1,\,\cos(x),\,\sin(x),\, \dots$ (right).}
\end{figure}

The rescalings $C^*_{j,n}=(2\pi/(2n+1))C_{j,n}$, $j=0,\ldots,2n$, were considered in~\cite{Roth1} in the context of \textit{Computer Aided Geometric Design (CAGD)}. It was shown therein to actually form a basis for the vector space of trigonometric polynomials (of order at most $n \geq 1$) given by 
\[\mathcal{V}_n=\text{span}\{1,\cos u, \sin u,\ldots,\cos nu, \sin nu\}.\] One important property of these rescalings to the CAGD community is that the resulting basis forms a partition of unity, meaning that $\sum_{j=0}^{2n}C^*_{j,n}(u)=1,$ for all $u \in \mathbb{R}$. The function $\omega_n=2\pi C_{0,n}$ is the so-called 
\textit{De la Vallée Poussin} kernel which has been studied by~\cite{PS58} and $C_{0,n}$ has also been refered to as Cartwright's power of cosine distribution \cite{Cartwright1963}.

We argue here that $\mathcal{B}_n$ provides an interesting model for densities of circular random variables, representing an angle or located on the circumference of a circle. Here is a formal definition of the \textit{angular domain} on which we work.

\label{angular_domain}

Circular random variables take their values on a circle $\s^1$, which we identify to the real line modulo $2\pi$. 
We therefore write $\s^1 = \mathbb{R} \;(\text{mod}\; 2\pi )$, so that $\s^1$ consists of equivalence classes $\{x + 2\pi k : k \in \mathbb{Z}\}$ and is represented by any half-open interval of length $2\pi$. In the following, we do not distinguish equivalence classes from their representatives. We endow $\s^1$ with the \emph{angular distance} $d$ defined as $d_{\s^1}(u, v) = \min_{k \in \mathbb{Z}}| u - v + 2\pi k |$. By the embedding $\theta \mapsto e^{i\theta}$ of $\mathbb{S}^1$ as the unit circle of the complex plane $\mathbb{C}$, the angular distance $d_{\s^1}$ becomes the arc length distance. For instance, an interval $[a,b)\subset \s^1$, $b-a < 2\pi$, can be viewed as an arc of length $b-a$ on the unit circle.

The following result gives elementary properties of the distributions corresponding to the densities in $\mathcal{B}_n$.

\begin{theorem}
The random variables on $\s^1$ given by 
$U_j= U+\frac{2\pi j}{2n+1}$, $j=0,\ldots,2n$, where $U=(1-2V)\cos^{-1}(1-2W)$, with $V$ and $W$ independently distributed, 
$V\sim Ber(1/2)$ and $W\sim Beta(1/2,1/2+n)$, have~\eqref{eq:basis} as densities. Furthermore, by letting $Z_j = 
e^{i U_j}$ be the corresponding random variable on the unit circle of $\mathbb{C}$, we have
\begin{equation}
\label{eq:moments}
\expec(Z_j^p)=
\begin{cases}
\frac{\binom{2n}{n-p}}{\binom{2n}{n}}e^{i \frac{2\pi jp}{2n+1}}, & \text{if } p\in \{-n,\ldots,n\},\\
0 & \text{if } p\in \ints \setminus \{-n,\ldots,n\}.
\end{cases}
\end{equation} 
\end{theorem}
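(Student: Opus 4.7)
The plan is to handle the density identification by a direct change of variables, then reduce the moment computation to a classical Fourier expansion of $\cos^{2n}(u/2)$.

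First, I would identify the distribution of $U$. Set $\Theta=\cos^{-1}(1-2W)\in(0,\pi)$, so that $W=\sin^{2}(\Theta/2)$ and $1-W=\cos^{2}(\Theta/2)$, with Jacobian $dw/d\theta=\sin(\theta/2)\cos(\theta/2)$. Plugging this into the $\operatorname{Beta}(1/2,n+1/2)$ density and simplifying gives
\[
f_{\Theta}(\theta)=\frac{\cos^{2n}(\theta/2)}{B(1/2,n+1/2)},\qquad \theta\in(0,\pi).
\]
Since $V\sim\operatorname{Ber}(1/2)$ is independent of $\Theta$, the variable $U=(1-2V)\Theta$ is the symmetrization of $\Theta$ around $0$, with density $f_{U}(u)=\tfrac{1}{2}f_{\Theta}(|u|)$ on $(-\pi,\pi)$. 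Using $\cos^{2n}(u/2)=((1+\cos u)/2)^{n}$ together with the standard identity $B(1/2,n+1/2)=\pi\binom{2n}{n}/2^{2n}$ (which follows from $\Gamma(n+1/2)=(2n)!\sqrt{\pi}/(4^{n}n!)$), this density is exactly $C_{0,n}(u)$. Finally, since $U_j$ is a deterministic $2\pi$-periodic translate of $U$ by $2\pi j/(2n+1)$, the density of $U_j$ on $\s^{1}$ is $C_{0,n}(u-2\pi j/(2n+1))=C_{j,n}(u)$.

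For the moments, I would factor $\expec(Z_j^{p})=e^{i2\pi jp/(2n+1)}\expec(e^{ipU})$, so it suffices to compute $\expec(e^{ipU})$ for $p\in\ints$. By the symmetry of $C_{0,n}$ around $0$, this equals $\expec(\cos(pU))$, which is real. Applying the binomial theorem to $(e^{iu/2}+e^{-iu/2})^{2n}/2^{2n}$ and grouping conjugate terms yields
\[
\cos^{2n}(u/2)=\frac{1}{2^{2n}}\left[\binom{2n}{n}+2\sum_{m=1}^{n}\binom{2n}{n-m}\cos(mu)\right].
\]
Substituting this into $C_{0,n}$ and integrating $e^{ipu}C_{0,n}(u)$ over $(-\pi,\pi)$ by orthogonality of $\{\cos(mu)\}_{m\ge 0}$ immediately gives $\expec(e^{ipU})=\binom{2n}{n-|p|}/\binom{2n}{n}$ for $|p|\le n$ and $0$ for $|p|>n$. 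Combining with the phase factor produces the stated expression for $\expec(Z_j^{p})$.

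There is no real obstacle: the argument is essentially a careful change of variables together with the Chebyshev/Fourier expansion of $\cos^{2n}(u/2)$. The only bookkeeping point to watch is the beta-function identity $B(1/2,n+1/2)=\pi\binom{2n}{n}/2^{2n}$, without which the normalizing constant in $C_{0,n}$ would not come out cleanly.
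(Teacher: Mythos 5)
Your proof is correct and follows essentially the same route as the paper's: an explicit change of variables for the density part (which the paper compresses into ``a straightforward application of the change of variables formula'') and a binomial expansion of $\cos^{2n}(u/2)$ with orthogonality over the circle for the moments. The only cosmetic difference is that you group conjugate exponentials into a cosine Fourier series, whereas the paper keeps the expansion in complex-exponential form and phrases orthogonality through $\expec(e^{-i(n-k-p)S})$ with $S$ uniform on $\s^1$; the two formulations are identical computations.
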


\begin{proof}
The first part is a straightforward application of the change of variables formula. For the integer moments, we have 
the equality
$\expec(Z_j^p)=e^{i\frac{2\pi jp}{2n+1}} \expec\left(Z_0^p\right)$. Using the identity 
\begin{equation}
\label{eq:trigident}
 C_{0,n}(u)=\frac{2^{2n}}{2\pi \binom{2n}{n}}\cos^{2n}(u/2), \quad u \in [0,2\pi),
\end{equation}
and letting $S\sim \mathcal{U}(\s^1)$, we find 
\[
\expec\left(Z_0^p\right)=\frac{1}{\binom{2n}{n}}\sum_{k=0}^{2n}\binom{2n}{k}\expec(e^{-i(n-k-p)S})
=
\begin{cases}
\frac{\binom{2n}{n-p}}{\binom{2n}{n}}, & \text{if } p\in \{-n,\ldots,n\},\\
0 & \text{if } p\in \ints \setminus \{-n,\ldots,n\}.
\end{cases}
\]
\end{proof}
The above integer moments~\eqref{eq:moments} are also known as the Fourier coefficients in \citet[p. 631]{Feller2} and as trigonometric moments in the directional statistics jargon, see for instance \cite{MJ00}, \cite{RS01} and recently \cite{CLB12}. From the result for $p=1$, we get that the mean direction of the $j^{th}$ component is $e^{i \frac{2\pi jp}{2n+1}}$ with the so-called \textit{circular variance} equal to $1/(n+1)$.

\subsection{The circular density model}
Let $\Delta_{2n}$ be the $2n$-dimensional simplex $\Delta_{2n}=\{(c_0,\ldots,c_{2n})\in [0,1]^{2n+1} \,:\, c_0+\cdots 
+c_{2n}=1\}$. Our model consists in mixtures of the form
\begin{equation}\label{eq:Cn}
	C_{n}(u;c_0,\ldots, c_{2n})=\sum_{j=0}^{2n}c_jC_{j,n}(u), \quad u \in \reals,
\end{equation}
with $(c_0,\ldots, c_{2n})\in \Delta_{2n}$, and $n \geq 0$. Let $\mathcal{C}_n$, $n \geq 0$, represent the 
set of mixtures obtained this way; our model is therefore 
\begin{equation}\label{eq:model_C}
	\mathcal{C}=\bigcup_{n\geq 0}\mathcal{C}_n.
\end{equation}

We now give a 
characterization of the model in terms of trigonometric polynomials. We use the following \textit{degree elevation} 
lemma, which is a reformulation of~\citet[Theorem 6]{Roth1}.
\begin{lemma}[Degree elevation formula]
Each $C_{j,n} \in \mathcal{B}_n$ given by~\eqref{eq:basis} can be expressed as
\begin{equation}
\label{eq:degelevc}
C_{j,n}(u)=\sum_{\ell=0}^{2(n+r)}d_{j,\ell}^{n,r}C_{\ell,n+r}(u), 
\end{equation}
with
\begin{equation}
\label{eq:degelevd}
d_{j,\ell}^{n,r}=\frac{1}{2(n+r)+1}\left\{1+\frac{2\binom{2(n+r)}{n+r}}{\binom{2n}{n}} 
\sum_{k=0}^{n-1}\frac{\binom{2n}{k}}{\binom{2(n+r)}{k+r}}\cos\left(\tfrac{2(n-k)\pi \ell}{2(n+r)+1}-\tfrac{2(n-k)\pi j}{2n+1}\right)\right\},
\end{equation}
for $\ell\in \{0,1,\ldots,2(n+r)\}$, and $r\geq 0$.
\end{lemma}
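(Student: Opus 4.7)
The natural approach is via trigonometric moments, since the preceding theorem gives us exactly what we need. Every $C_{j,n}$ lies in $\mathcal{V}_n \subset \mathcal{V}_{n+r}$, and since $\mathcal{B}_{n+r}$ is a basis of $\mathcal{V}_{n+r}$ (by Roth's result cited earlier), an expansion of the form~\eqref{eq:degelevc} exists with unique real coefficients $d_{j,\ell}^{n,r}$. The plan is to pin down these coefficients by matching Fourier coefficients on both sides.

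First, I would translate the moment formula~\eqref{eq:moments} into the Fourier series
\[
C_{j,n}(u) = \frac{1}{2\pi}\sum_{p=-n}^{n}\frac{\binom{2n}{n-p}}{\binom{2n}{n}}e^{-i\frac{2\pi j p}{2n+1}}e^{ipu},
\]
and similarly for each $C_{\ell,n+r}$, using that $\binom{2m}{m-p} = \binom{2m}{m+p}$ to ensure the right sign convention. Substituting into~\eqref{eq:degelevc} and comparing the coefficient of $e^{ipu}$ for each $p$ with $|p| \le n+r$ yields a linear system of $2(n+r)+1$ equations in the $2(n+r)+1$ unknowns $d_{j,\ell}^{n,r}$. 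Writing $N = 2(n+r)+1$, the system reads
\[
\sum_{\ell=0}^{2(n+r)} d_{j,\ell}^{n,r}\, e^{-i\frac{2\pi \ell p}{N}} \;=\; \alpha_p,
\]
where $\alpha_p = \binom{2n}{n-p}\binom{2(n+r)}{n+r}/\bigl(\binom{2n}{n}\binom{2(n+r)}{n+r-p}\bigr)\, e^{-i\frac{2\pi j p}{2n+1}}$ for $|p|\le n$ and $\alpha_p = 0$ for $n < |p| \le n+r$.

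The left-hand side is a discrete Fourier transform in $\ell$, which is invertible with inverse $d_{j,\ell}^{n,r} = N^{-1}\sum_{p=-(n+r)}^{n+r}\alpha_p\, e^{i\frac{2\pi \ell p}{N}}$. Since $\alpha_{-p} = \overline{\alpha_p}$ (because $\binom{2n}{n-p}=\binom{2n}{n+p}$ and the exponential flips conjugation), I would split off the term $p=0$ (contributing $1$) and combine each pair $\pm p$ into a cosine, obtaining
\[
d_{j,\ell}^{n,r} = \frac{1}{N}\left\{1 + \frac{2\binom{2(n+r)}{n+r}}{\binom{2n}{n}}\sum_{p=1}^{n}\frac{\binom{2n}{n-p}}{\binom{2(n+r)}{n+r-p}}\cos\!\left(\tfrac{2\pi\ell p}{N} - \tfrac{2\pi j p}{2n+1}\right)\right\}.
\]
Finally, reindexing via $k = n-p$ turns $\binom{2n}{n-p}$ into $\binom{2n}{k}$ and $\binom{2(n+r)}{n+r-p}$ into $\binom{2(n+r)}{k+r}$, and the factor $p$ in the cosine becomes $n-k$; substituting $N = 2(n+r)+1$ reproduces~\eqref{eq:degelevd} exactly.

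The only real obstacle is the bookkeeping in the reindexing and in verifying the conjugate symmetry carefully so that the resulting cosines come out with the correct argument; the existence and uniqueness of the expansion are automatic from the basis property of $\mathcal{B}_{n+r}$, and the inverse DFT makes the solution explicit.
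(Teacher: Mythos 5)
Your proof is correct, and it takes a genuinely different route from the paper: the paper does not actually prove this lemma but simply attributes it to Roth's Theorem~6 (a result from the CAGD literature about bases for spaces of trigonometric polynomials). Your argument is self-contained and built on material already present in the paper: you read the trigonometric moment formula~\eqref{eq:moments} as giving the Fourier series of each $C_{j,n}$, compare Fourier coefficients on both sides of~\eqref{eq:degelevc} to get the linear system $\sum_\ell d_{j,\ell}^{n,r}\,e^{-i2\pi\ell p/N}=\alpha_p$, and then invert the resulting discrete Fourier transform in $\ell$ in closed form. The conjugate symmetry $\alpha_{-p}=\overline{\alpha_p}$, the pairing of $\pm p$ into cosines, and the reindexing $k=n-p$ all check out and deliver~\eqref{eq:degelevd} exactly. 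A small stylistic note: you do not actually need to invoke existence and uniqueness from Roth's basis property at the outset, since the inverse DFT formula simultaneously \emph{constructs} the coefficients and shows, by running the computation backwards, that~\eqref{eq:degelevc} holds; but appealing to the basis property first is harmless and makes the logic transparent. What your approach buys over the paper's citation is a short, elementary, Fourier-analytic proof that connects the degree-elevation formula directly to the trigonometric moments of the basis densities, reinforcing the probabilistic interpretation of $\mathcal{B}_n$ rather than relying on external CAGD machinery.
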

To give the characterization, let $\mathcal{D}_n\subset \mathcal{V}_n$ be the subset of trigonometric polynomial densities (of order at most $n \geq 1$), and let $\mathcal{D}_n^+\subset \mathcal{D}_n$ be the positive ones.

\begin{theorem}[Characterization]\label{characterization}
We have $\mathcal{C}=\bigcup_{n\geq 0}\{\mathcal{B}_n\cup \mathcal{D}_n^+\}$.
\end{theorem}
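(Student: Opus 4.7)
The plan is to establish the two inclusions separately. The inclusion $\mathcal{C} \subseteq \bigcup_n(\mathcal{B}_n \cup \mathcal{D}_n^+)$ reduces to a short case analysis on the support of the mixing weights, while the reverse direction reduces to proving that every strictly positive trigonometric polynomial density lies in some $\mathcal{C}_N$. For the latter I would go through the Fourier coefficients supplied by the preceding theorem.

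For the easy inclusion, take $f = \sum_{j=0}^{2n} c_j C_{j,n} \in \mathcal{C}_n$: it is automatically a density in $\mathcal{V}_n$. If exactly one $c_j$ equals $1$ then $f \in \mathcal{B}_n$. Otherwise there exist distinct indices $j \neq k$ with $c_j, c_k > 0$; since each $C_{i,n}$ vanishes only at the single point $\pi + 2\pi i/(2n+1)$, and these points differ for $i = j$ and $i = k$, the lower bound $f \geq c_j C_{j,n} + c_k C_{k,n}$ is strictly positive at every point of $\s^1$, placing $f$ in $\mathcal{D}_n^+$. The opposite half of the converse, $\mathcal{B}_n \subseteq \mathcal{C}_n$, is obtained by setting one $c_j$ to $1$ and the rest to zero.

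For the hard inclusion, I would fix $f \in \mathcal{D}_n^+$ and use that, for every $N \geq n$, $\mathcal{B}_N$ is a basis of $\mathcal{V}_N \supseteq \mathcal{V}_n$, yielding a unique expansion $f = \sum_\ell b_\ell^{(N)} C_{\ell, N}$. The preceding theorem provides the Fourier coefficient $\hat{C}_{\ell, N}(p) = \frac{1}{2\pi}\frac{\binom{2N}{N-p}}{\binom{2N}{N}} e^{-i 2\pi \ell p/(2N+1)}$ for $|p| \leq N$, so the relation $\hat{f}(p) = \sum_\ell b_\ell^{(N)} \hat{C}_{\ell,N}(p)$ becomes a discrete Fourier transform in $\ell$. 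Inverting and using $\hat{f}(p) = 0$ for $|p|>n$ produces
\begin{equation*}
b_\ell^{(N)} = \frac{2\pi}{2N+1}\sum_{|p| \leq n} \frac{\binom{2N}{N}}{\binom{2N}{N-p}} \hat{f}(p)\, e^{i 2\pi \ell p/(2N+1)}.
\end{equation*}
As $N \to \infty$ with $n$ fixed, $\binom{2N}{N}/\binom{2N}{N-p} \to 1$ for each $|p| \leq n$, so the finite sum converges to $\sum_{|p|\leq n} \hat{f}(p) e^{i p u_\ell^{(N)}} = f(u_\ell^{(N)})$, with $u_\ell^{(N)} = 2\pi \ell/(2N+1)$, uniformly in $\ell$. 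Since $f$ is continuous and strictly positive on the compact $\s^1$, $\min f > 0$, so $b_\ell^{(N)} > 0$ for every $\ell$ once $N$ is large enough. Integrating the expansion forces $\sum_\ell b_\ell^{(N)} = 1$, placing $(b_\ell^{(N)})_\ell \in \Delta_{2N}$ and thus $f \in \mathcal{C}_N \subseteq \mathcal{C}$.

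The main obstacle is precisely this uniform passage from strict positivity of $f$ to nonnegativity of the expansion coefficients at large $N$. It depends on the elementary limit $\binom{2N}{N}/\binom{2N}{N-p} \to 1$ (a Stirling computation) together with the Fourier formula already at hand, so the argument stays quite short. A heavier alternative would be to substitute the degree-elevation weights $d_{j,\ell}^{n,r}$ of the preceding lemma directly into $b_\ell^{(r)} = \sum_j a_j d_{j,\ell}^{n,r}$ and extract the same asymptotic from the explicit trigonometric formula; this avoids the Fourier detour at the cost of considerably more bookkeeping.
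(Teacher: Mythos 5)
Your proof is correct, and the hard direction takes a genuinely different route from the paper's. The paper decomposes a fixed $f \in \mathcal{D}_n^+$ in the basis $\mathcal{B}_n$ (with possibly negative coefficients), pushes those coefficients up to $\mathcal{B}_{n+r}$ via the explicit degree-elevation weights $d_{j,\ell}^{n,r}$, and then bounds $\bigl|\tfrac{2(n+r)+1}{2\pi}c_\ell^{n+r} - C_n\bigl(\tfrac{2\pi\ell}{2(n+r)+1}\bigr)\bigr|$ by a quantity of order $(1+n/r)^n - 1 \to 0$, which forces positivity for large $r$. You instead expand directly in $\mathcal{B}_N$, read off the coefficients by inverting a discrete Fourier transform against the moment formula $\hat C_{\ell,N}(p) = \tfrac{1}{2\pi}\tfrac{\binom{2N}{N-p}}{\binom{2N}{N}}e^{-i2\pi\ell p/(2N+1)}$, and observe that the resulting finite sum tends, uniformly in $\ell$, to $f(u_\ell^{(N)}) \geq \min f > 0$ as $N \to \infty$ because $\binom{2N}{N}/\binom{2N}{N-p}\to 1$ for each fixed $p$ with $|p|\leq n$. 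Both arguments land on the same key estimate — rescaled coefficients approximate sample values of the positive density — but yours sidesteps the degree-elevation bookkeeping entirely by exploiting the fact that $\hat f(p) = 0$ for $|p| > n$, which makes the DFT inversion a manifestly finite sum with $N$-independent length. This is cleaner and arguably more transparent. Your handling of the easy inclusion is also slightly more explicit than the paper's: you justify strict positivity by noting that each $C_{j,n}$ vanishes at a single, $j$-dependent point, so two positive weights already make the mixture positive everywhere; the paper simply asserts the conclusion. One small point worth making explicit in a write-up: $b_\ell^{(N)}$ is real because $\hat f(-p) = \overline{\hat f(p)}$ and $\binom{2N}{N-p} = \binom{2N}{N+p}$ pair the $\pm p$ terms into conjugates.
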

 
\begin{proof}
If $C_{n} \in \mathcal{C}_n\cap\mathcal{B}_n^c$, then we have $C_{n}(u)>0$ for all $u$, and this shows $\mathcal{C}\subset\bigcup_{n\geq 0}\{\mathcal{B}_n\cup \mathcal{D}_n^+\}$. For the converse inclusion, let $C_{n} \in \mathcal{D}_n^+$, be a positive trigonometric polynomial density, that is, $C_{n}(u)= \sum_{j=0}^{2n}c_j^nC_{j,n}(u)>0$, for all $u \in \s^1$, with $\sum_{j=0}^{2n}c_j^n=1$. Some of the $c_j^n$'s may be negative here. However, by the degree elevation lemma we have
\begin{equation*}
 C_{n}(u)= \sum_{\ell=0}^{2(n+r)}\left\{\sum_{j=0}^{2n}c_j^nd_{j,\ell}^{n,r}\right\}C_{\ell,n+r}(u),  
\end{equation*}
with $d_{j,\ell}^{n,r}$ given by~\eqref{eq:degelevd}. The resulting coefficients $c_{\ell}^{n+r}=\sum_{j=0}^{2n}c_j^nd_{j,\ell}^{n,r}$ also have the property $\sum_{\ell=0}^{2(n+r)}c_{\ell}^{n+r}=1$, and so it remains to show that there is some $r\geq 0$ such that $c_{\ell}^{n+r}\geq 0$, for every $\ell=0,\ldots,2(n+r)$. To see this, use~\eqref{eq:trigident} and the binomial identity to write
\begin{equation*}
C_{n}\left(\tfrac{2\pi \ell}{2(n+r)+1}\right)=\frac{1}{2\pi}\left\{1+\frac{2}{\binom{2n}{n}}\sum_{k=0}^{n-1} \binom{2n}{k} 
  \sum_{j=0}^{2n}c_j^n\cos\left(\tfrac{2(n-k)\pi 
\ell}{2(n+r)+1}-\tfrac{2(n-k)\pi j}{2n+1}\right)\right\}.
\end{equation*}
After some manipulations, and using the fact that $k\mapsto \binom{2(n+r)}{k+r}$ is increasing on $\{0,\ldots,n-1\}$, we find
\begin{align*}
 \left|\tfrac{2(n+r)+1}{2\pi}c_{\ell}^{n+r}-C_{n}\left(\tfrac{2\pi \ell}{2(n+r)+1}\right)\right| &\le
\alpha_1(n)\left(\sum_{k=0}^{n-1} 
\binom{2n}{k}\left|\frac{\binom{2(n+r)}{n+r}}{\binom{2(n+r)}{k+r}}
-1\right|\right)\\
&\le \alpha_2(n)\left(\frac{\binom{2(n+r)}{n+r}}{
\binom{2(n+r)}{r}}-1\right),
\end{align*}
where $\alpha_1(n),\alpha_2(n)>0$. A final calculation shows that
\[
\frac{\binom{2(n+r)}{n+r}}{\binom{2(n+r)}{r}}-1=\frac{(2n+r)(2n+r-1)\cdots (n+r+1)}{(n+r)(n+r-1)\cdots (r+1)}-1\leq 
(1+n/r)^n-1.
\]
Since $C_{n} \in \mathcal{D}_n^+$ is positive by assumption, this shows that for large enough $r$, we have $c_{\ell}^{n+r}> 0$, for every $\ell=0,\ldots,2(n+r)$, and therefore $C_{n}\in \mathcal{C}$.
\end{proof}

As mentioned in the introduction, a criticism made by \cite{FJS08} concerning the nonnegative trigonometric polynomials proposed by \cite{FD04} and \cite{FD07} is that “approximating a function (using nonnegative trigonometric polynomials) often results in a wiggly approximation, unlikely to be useful in most real applications”. 

In the following, we define the notion of \textit{cyclic variations} to formalize “wiggliness” and show that it can be controlled using our basis. 

One way of quantifying “wiggliness” was discussed by~\cite{PS58} via the cyclic variations. For a finite sequence $x=(x_1,\ldots, x_m)$, $m\geq 2$, denote by $v(x)$ the number of sign changes (from positive to negative or vice versa) in the terms of the sequence. Denote by $\vcirc(x)=v(x_i,x_{i+1},\ldots,x_m,x_1,x_2,\ldots,x_{i-1},x_i)$, $x_i\neq 0$, the \emph{cyclic variation} of the sequence, with $\vcirc(x) = 0$ if $x = 0$. This is well defined because $\vcirc$ does not depend on the particular index $i$ such that $x_i \not = 0$. Notice that the value of $\vcirc$ is always an even number not exceeding $m$. The sequence $x$ is said to be \textit{periodically unimodal} if $\vcirc(\dcirc x)=2$, where $\dcirc x = (x_2-x_1,\ldots,x_{m}-x_{m-1}, x_{1} - x_{m})$. For a function $f : \s^1 \rightarrow \mathbb{R}$, we make use of the notation
\[
\vcirc(f)=\sup\{\vcirc(f(x_i)_{i=1}^m): 0\leq x_1<x_2<\cdots<x_m<2\pi, \, m\geq 2\},
\]
and $Z(f)=\#\{x\in [0, 2\pi) : f(x)=0\}$. Similarly to the discrete case, such a function $f$ is said to be \textit{periodically unimodal}, also called \textit{periodically monotone} by~\cite{PS58}, if $\vcirc(f')=2$, provided $f'$ exists (a more general definition without the differentiability assumption is given in the latter paper but is not needed in our case). 

We have the following results.
\begin{theorem}
\label{thm:var}
For $C_{n}=\sum_{j=0}^{2n}c_jC_{j,n}\in \mathcal{C}_n$, let $c=(c_0,\ldots,c_{2n})\in \Delta_{2n}$. We have 
\begin{enumerate}[(i)]
 \item 
\[
\vcirc(C_{n}-\alpha)\leq Z(C_{n}-\alpha)\leq \vcirc\left(\frac{2n+1}{2\pi}c-\alpha\right), \quad \text{for all 
} \alpha\geq 0.
\]
\item A bound for the total variation of $C_{n}$ is given by
\[\tv(C_{n}):=\int_0^{2\pi}|C_{n}'(u)| \, du \leq \frac{2n+1}{2\pi}\sum_{j=0}^{2n}|c_{j+1}-c_j| \leq (2n+1)/\pi,\]
where $c_{2n+1}=c_0$.
\item If $c=(c_0,\ldots,c_{2n})$ is periodically unimodal, then $C_{n}$ is also periodically unimodal.
\end{enumerate}
\end{theorem}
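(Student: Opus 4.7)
The plan is to prove (i) first, and then deduce (ii) and (iii) from it.

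For (i), the first inequality $\vcirc(C_n - \alpha) \le Z(C_n - \alpha)$ is a general consequence of continuity: in any cyclic sampling of $C_n - \alpha$, the intermediate value theorem produces a zero in each arc on which a sign change occurs, and distinct sign changes yield distinct zeros, so the supremum over samplings is at most $Z(C_n - \alpha)$. For the second inequality, the key step is to centre the polynomial using the partition-of-unity property $\sum_j \tfrac{2\pi}{2n+1}C_{j,n}(u) \equiv 1$, which gives
\[
C_n(u) - \alpha \;=\; \frac{2\pi}{2n+1}\sum_{j=0}^{2n}\!\left(\tfrac{2n+1}{2\pi}c_j - \alpha\right)\!C_{j,n}(u).
\]
I would then invoke the cyclic variation-diminishing property of the de la Vallée Poussin basis established by~\cite{PS58}: for any real $a_0, \ldots, a_{2n}$, $Z(\sum_j a_j C_{j,n}) \le \vcirc(a_0, \ldots, a_{2n})$. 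Applying this with $a_j = \tfrac{2n+1}{2\pi}c_j - \alpha$ yields the desired bound.

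For (ii), I use the Banach indicatrix formula $\tv(C_n) = \int_{\reals} N(\alpha)\,d\alpha$, where $N(\alpha) = \#\{u \in [0,2\pi) : C_n(u) = \alpha\}$. At each regular value $\alpha$ (a.e.\ $\alpha$, by Sard's theorem), zeros of $C_n - \alpha$ are simple crossings, so $N(\alpha) = Z(C_n - \alpha) \le \vcirc(\tfrac{2n+1}{2\pi}c - \alpha)$ by (i). Combining this with the discrete identity
\[
\int_{\reals}\vcirc(v - \alpha\mathbf{1})\,d\alpha \;=\; \sum_{j=0}^{2n}|v_{j+1} - v_j|
\qquad (v_{2n+1} := v_0),
\]
which follows by Fubini from the pointwise representation $\vcirc(v - \alpha\mathbf{1}) = \sum_j \mathbf{1}_{\min(v_j, v_{j+1}) < \alpha < \max(v_j, v_{j+1})}$, produces the first inequality $\tv(C_n) \le \tfrac{2n+1}{2\pi}\sum_j |c_{j+1}-c_j|$. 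The second inequality $\sum_j |c_{j+1}-c_j| \le 2$ is immediate from $|c_{j+1}-c_j| \le c_{j+1}+c_j$ and $\sum_j c_j = 1$.

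For (iii), periodic unimodality of $c$ is equivalent to $\vcirc(\tfrac{2n+1}{2\pi}c - \alpha) \le 2$ for every $\alpha \in \reals$, since a cyclically unimodal sequence has each super-level set in the form of a single cyclic arc. By (i), this transfers to $\vcirc(C_n - \alpha) \le 2$ for all $\alpha$, so every super-level set $\{C_n > \alpha\}$ of the continuous function $C_n$ is a single arc of $\s^1$. A non-constant continuous circular function with that property has exactly one local maximum and one local minimum, i.e.\ $\vcirc(C_n') = 2$; non-constancy is automatic because $C_n \equiv \tfrac{1}{2\pi}$ would force $c_j \equiv \tfrac{1}{2n+1}$, contradicting the assumption.

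The main obstacle is part (i), specifically the cyclic variation-diminishing property of $\mathcal{B}_n$: either one adapts the zero-counting arguments of \cite{PS58} for the de la Vallée Poussin kernel, or one appeals to the cyclic Pólya-frequency theory of totally positive kernels. Granted that, (ii) reduces to the Banach indicatrix computation together with an elementary discrete identity, and (iii) reduces to a super-level-set observation.
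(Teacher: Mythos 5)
Your proof is correct and follows essentially the same route as the paper: part (i) via the zero-counting bound of Pólya--Schoenberg (cited in the paper as \cite{PS58}, Lemma~3) applied to the centered expansion $C_n(u)-\alpha=\sum_j\bigl(c_j-\tfrac{2\pi\alpha}{2n+1}\bigr)C_{j,n}(u)$; part (ii) via the Banach indicatrix theorem combined with (i); and part (iii) via a contradiction showing that more than one local extremum of each type would force too many level crossings, contradicting (i). The one notable divergence is in (ii): the paper introduces the piecewise-linear interpolant $P_n$ of the points $(\tfrac{2\pi j}{2n+1},\tfrac{2n+1}{2\pi}c_j)$ and invokes the indicatrix formula a second time for $P_n$, whereas you integrate the cyclic-variation bound directly through the Fubini-type identity $\int_{\reals}\vcirc(v-\alpha)\,d\alpha=\sum_j|v_{j+1}-v_j|$, which is a small but clean shortcut that dispenses with $P_n$. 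Similarly, in (iii) you phrase the contradiction in terms of super-level sets being single arcs rather than the paper's explicit bookkeeping of interlaced local minima and maxima $a_1,\ldots,a_{2k}$, but the underlying mechanism (pick an $\alpha$ between the larger local minimum value and the smaller local maximum value to get $Z(C_n-\alpha)>2$) is the same; your appeal to ``exactly one local maximum and one local minimum'' tacitly uses that $C_n'$ is a nonzero trigonometric polynomial and hence has finitely many zeros, which is worth stating but does not affect correctness.
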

\begin{proof}
The proof of \textit{(i)} follows by~\citet[Lemma 3]{PS58} by noticing that 
\[
	C_{n}(u)-\alpha=\sum_{j=0}^{2n}\left\{\frac{c_j}{2\pi}-\frac{\alpha}{2n+1}\right\}\omega_n\left(u-\frac{2\pi j}{2n+1}\right),\quad u \in \s^1,
\]
with $\omega_n=2\pi \,C_{0,n}$ the \textit{De la Vallée Poussin} kernel. Their result says (in this 
case) that $Z(C_{n}-\alpha)\leq \vcirc\left(c_j/2\pi-\alpha/(2n+1)\right)_{j=0}^{2n}$, which implies \textit{(i)}. 

To show \textit{(ii)}, let $P_n : \s^1 \rightarrow \mathbb{R}$ be the continuous and $2\pi$-periodic, piecewise linear interpolation of the points $(2\pi j/(2n+1),(2n+1)c_j/2\pi)\in \s^1 \times \mathbb{R}$,  $j\in\{0,\ldots, 2n\}$. For definiteness,
\begin{equation}\label{triangular_mixture}
	P_n(u) = \sum_{j=0}^{2n} c_j L_{j}(u), \quad u \in \s^1,
\end{equation}
where $L_{j}(u) = 0 \vee  \frac{2n+1}{2\pi}(1 - \frac{2n+1}{2\pi}d_{\s^1}(u, \frac{2\pi j}{2n+1}))$.
By \textit{(i)} and the Banach Indicatrix Theorem, see \cite{BC09}, we have
\begin{align*}
\tv(C_{n})=\int_0^\infty Z(C_{n}-\alpha) \, d\alpha
&\leq \int_0^\infty\vcirc\left(\frac{2n+1}{2\pi}c-\alpha\right) 
d\alpha,\\
&\leq \int_0^\infty Z(P_n-\alpha) \, d\alpha\\
&= \tv(P_n)=\frac{2n+1}{2\pi}\sum_{j=0}^{2n}|c_{j+1}-c_j|.
\end{align*}
Now a (sharp) bound is easily found for the last sum by $\sum_{j=0}^{2n}|c_{j+1}-c_j|=\|(c_1,\ldots,c_{2n+1})-(c_0,\ldots,c_{2n})\|_1\leq 2$, which leads to the assertion $\tv(C_{n})\leq (2n+1)/\pi$. 

For \textit{(iii)}, we assume $\vcirc(\dcirc c)=2$ and we want to show that $\vcirc(C_{n}')=2$. First, if $\vcirc(C_{n}')=0$ then $C_{n}'$ is either nonnegative or nonpositive. By continuity of $C_{n}'$, we have $0=C_{n}(2\pi)-C_{n}(0)=\int_0^{2\pi}C_{n}'(u)\, du,$ which implies $C_{n}'(u)=0$, for all $u\in [0,2\pi)$, and this gives $c_i=1/(2n+1)$, $i=0,\ldots,2n$. Thus, $\vcirc(C_{n}')=2k$, for some $1\leq k\leq n$. The unit circle $\s^1$ can therefore be partitioned into $2k$ open arcs $A_1,\ldots,A_{2k}$ with $(-1)^j C_{n}$ being nondecreasing on $A_j$, $j=1,\ldots,2k$ and with (anticlockwise) end points $a_1,\ldots,a_{2k}$ (listed in anticlockwise order) being interlaced local minima $\{a_1,a_3\ldots,a_{2k-1}\}$ and maxima $\{a_2,\ldots,a_{2k}\}$ of $C_{n}$. Assume $k>1$ and without loss of generality $a_2\leq a_4$. Let $m=\max\{a_1,a_3\}$. By the monotonicity of $C_{n}$ on each arc, each of which being a connected set (relatively to the topology induced by the angular distance $d$), the Intermediate Value Theorem gives $Z(C_{n}-\alpha)>2$ for all $\alpha \in (m,a_2)$. By the same argument, using the fact  that $\vcirc(\dcirc c)=2$, we obtain 
\[\vcirc\left(\frac{2n+1}{2\pi}c-\alpha\right)=
 \begin{cases}
2, & \text{if } \alpha \in (\min(c),\max(c)),\\
0 & \text{otherwise},
\end{cases}
\]
contradicting \textit{(i)}, and this implies $k=1$.
\end{proof}

\section{Prior specification}
\label{S:Prior}


\subsection{Circular density prior}
 
Our prior $\Pi$ on the space $\bd = \bd(\s^1)$ of bounded circular densities, parametrized by a Dirichlet process $\mathcal{D}$ and a distribution $\rho$ on $\{1, 2, 3, \dots\}$, is induced by the random density
\begin{equation}\label{eq:mixture_prior}
	\sum_{j=0}^{2N} \mathcal{D}(R_{j,N}) C_{j,N},\quad N \sim \rho,
\end{equation}
where $R_{j,n} = \left[\frac{\pi(2j-1)}{2n+1}, \frac{\pi(2j+1)}{2n+1}\right)\subset \s^1$. If $\mathcal{D}$ has a base probability measure $G$ and a concentration parameter $M > 0$, then
\begin{equation} \label{eq:def_Pi}
	\Pi(B) = \sum_{n \geq 0} \rho(n) \Pi_n(B \cap \mathcal{C}_n), \quad B \in \mathcal{B},
\end{equation}
where $\Pi_n = \Pi_{\Delta_{2n}} \circ l_n^{-1}$, $\Pi_{\Delta_{2n}}$ is the Dirichlet distribution of parameters $M G(R_{j,n})$, $j =0,1, \dots 2n$, and where $l_n:\Delta_{2n} \ni (c_0, \dots, c_{2n}) \mapsto \sum_{j=0}^{2n} c_j C_{j,n} \in \mathcal{C}_n$. 

Strong posterior consistency is obtained using Theorem \ref{thm:strong_consistency} of Section \ref{sec:strong_consistency}. The theorem requires the conditional distributions $\Pi_n$ to have full support on $\mathcal{C}_n$,  that $0 < \rho(n) < ce^{-Cn}$ for some $c, C > 0$, and that proper approximation properties of the sieves $\mathcal{C}_n$ are assessed by a sequence $T_n : L^1(\mathbb{M}) \rightarrow L^1(\mathbb{M})$ of linear operators, mapping densities to densities, such that $T_n(\bd) = \mathcal{C}_n \subset \bd$. Here we let $T_n$ be defined by
\begin{equation}\label{eq:Tn_poussin_discretized}
	T_n f = \sum_{j=0}^{2n}\int_{R_{j,n}} f(u)du\, C_{j,n}.
\end{equation}
The only condition of the theorem that is not readily verified is given in the following lemma.

\begin{lemma}\label{lem:approx_circle}
	For every continuous function $f$ on $\s^1$, $\| T_n f - f \|_\infty \rightarrow 0$.
\end{lemma}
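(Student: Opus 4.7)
The plan is to invoke the classical Korovkin theorem on the circle: if $(L_n)$ is a sequence of positive linear operators on $C(\s^1)$ with $L_n f \to f$ uniformly for each $f \in \{1, \cos u, \sin u\}$, then $L_n f \to f$ uniformly for every $f \in C(\s^1)$. The task therefore reduces to verifying these three limits for $L_n = T_n$.

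Linearity and positivity of $T_n$ are immediate from \eqref{eq:Tn_poussin_discretized} and the nonnegativity of the $C_{j,n}$, and $T_n 1 = 1$ follows at once from the partition-of-unity identity $\sum_{j=0}^{2n} \tfrac{2\pi}{2n+1} C_{j,n} \equiv 1$ (mentioned just after \eqref{eq:basis}), together with the fact that $\int_{R_{j,n}} 1\, dv = \tfrac{2\pi}{2n+1}$.

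The heart of the argument is the computation of $T_n e^{iu}$. Writing $x_j = 2\pi j/(2n+1)$, an elementary integration yields $\int_{R_{j,n}} e^{iv}\, dv = 2\sin\bigl(\pi/(2n+1)\bigr) e^{ix_j}$, so that
\[
T_n e^{iu}(x) = 2\sin\!\bigl(\tfrac{\pi}{2n+1}\bigr)\sum_{j=0}^{2n} e^{ix_j}\, C_{j,n}(x).
\]
To evaluate the sum I would expand $C_{0,n}$ in its Fourier series, which by the moment identity \eqref{eq:moments} takes the explicit form
\[
C_{0,n}(u) = \frac{1}{2\pi\binom{2n}{n}}\sum_{p=-n}^n \binom{2n}{n+p} e^{ipu},
\]
then substitute $C_{j,n}(x) = C_{0,n}(x - x_j)$ and interchange the two sums. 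The roots-of-unity identity $\sum_{j=0}^{2n} e^{i(1-p)x_j} = (2n+1)\,\indic_{\{p=1\}}$, valid for $|p|\le n$ since $|1-p| \le n+1 < 2n+1$, collapses everything to the single term $p=1$, giving
\[
\sum_{j=0}^{2n} e^{ix_j}\, C_{j,n}(x) = \frac{n(2n+1)}{2\pi(n+1)}\, e^{ix},
\]
and hence
\[
T_n e^{iu}(x) = \frac{\sin(\pi/(2n+1))}{\pi/(2n+1)}\cdot\frac{n}{n+1}\cdot e^{ix}.
\]
Both scalar prefactors converge to $1$, so $T_n e^{iu} \to e^{iu}$ uniformly; taking real and imaginary parts yields the uniform convergence of $T_n \cos u$ and $T_n \sin u$, and the Korovkin theorem completes the proof.

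I expect the Fourier/orthogonality evaluation of $\sum_j e^{ix_j} C_{j,n}(x)$ to be the only step with any real content; everything else is either bookkeeping or the invocation of a standard theorem. An alternative route would be to compare $T_n$ directly with the de la Vallée Poussin convolution $f \mapsto \int f(v)\, C_{0,n}(\cdot - v)\, dv$, which is a classical approximate identity on $\s^1$, and bound the discretization error; but this appears strictly more painful than the Korovkin approach above, which leverages the explicit moments \eqref{eq:moments} to the fullest.
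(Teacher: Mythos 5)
Your proof is correct, and it takes a genuinely different route from the paper's. The paper proves Lemma \ref{lem:approx_circle} by verifying the three hypotheses of a general localization lemma (Lemma \ref{lem:approx} in the appendix, modelled on Lorentz): (i) the partition cells $R_{j,n}$ have shrinking diameter, (ii) the mass $\sum_{j: d_{\s^1}(u,R_{j,n})\ge\delta}\frac{2\pi}{2n+1}C_{j,n}(u)$ vanishes uniformly, using only unimodality of $C_{0,n}$, and (iii) the rescalings form a partition of unity. This is a ``soft'' argument resting on uniform continuity and concentration, and it never needs the explicit trigonometric moments. You instead invoke the periodic Korovkin theorem on the test set $\{1,\cos,\sin\}$ and compute $T_n e^{iu}$ exactly; I checked the calculation and it is right: the Fourier expansion of $C_{0,n}$ follows from \eqref{eq:trigident} by the binomial theorem, the roots-of-unity orthogonality collapses the sum at $p=1$ (valid for $n\ge 1$), and the product $\frac{\sin(\pi/(2n+1))}{\pi/(2n+1)}\cdot\frac{n}{n+1}\to 1$ gives uniform convergence on the test functions. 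Your approach buys an explicit rate $\|T_n e^{iu} - e^{iu}\|_\infty = 1 - \frac{n}{n+1}\frac{\sin(\pi/(2n+1))}{\pi/(2n+1)} = \mathcal{O}(1/n)$ for the test functions, and is pleasantly concrete; the paper's approach buys reusability, since Lemma \ref{lem:approx} is stated for a general compact metric space and is applied again to the framework of Section \ref{sec:strong_consistency}, where a Korovkin test-function set is not available. One small point you could make explicit: $T_n$ maps $C(\s^1)$ into itself (the image consists of trigonometric polynomials), so the restriction to continuous functions is indeed a positive linear operator on $C(\s^1)$ as the Korovkin theorem requires.
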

\begin{proof}
	We use Lemma \ref{lem:approx}, in the appendix (a result is similar to that of \citet*[Theorem 1.2.1]{Lorentz86}), which gives three sufficient conditions $\textit{(i)}-\textit{(iii)}$ for uniform convergence. We denote $d_{\s^1}(u, R_{j,n}) = \inf_{v \in R_{j,n}} d(u,v)$, and $\diam(R_{j,n})=\sup_{u,v \in R_{j,n}} d_{\s^1}(u,v)$. Here $\textit{(i)}$ is immediate by $\diam(R_{j,n})=2\pi/(2n+1)$, $j=0,\ldots,2n$, and $\textit{(iii)}$ follows from the partition of unity property of $\frac{2\pi}{2n+1}C_{j,n}$. Assumption $\textit{(ii)}$ follows since $C_{0,n}$ is unimodal with mode at $0$, and $d_{\s^1}(u, R_{j,n}) \geq \delta>0$ implies
\[
	C_{j,n}(u) = C_{0,n}\left(d_{\s^1}\left(u, \frac{2\pi j}{2n+1}\right)\right) \le C_{0,n}\left(d_{\s^1}\left(u, R_{j,n}\right)\right) \le C_{0,n}(\delta),
\]
therefore $\sum_{j: d_{\s^1}(u, R_{j,n}) \ge \delta} \frac{2\pi}{2n+1} C_{j,n}(u) \le 2\pi C_{0,n}(\delta) \rightarrow 0$, $n\to \infty$, uniformly over $u\in \s^1$.
\end{proof}

The prior may be interpreted similarly as the Bernstein-Dirichlet prior of \cite{Petrone1999a}. Conditionally on a fixed $n$, the random histogram $H_n = \frac{2n+1}{2\pi} \sum_{j=0}^{2n} c_{j,n} \indic_{R_{j,n}}$ is immediately understood through the Dirichlet distribution on $(c_{0,n}, \dots, c_{2n,n})$. Since $\sum_{j=0}^{2n} c_{j,n}C_{j,n} = T_n H_n$, the following proposition together with Lemma \ref{lem:approx_circle} shows that the finite mixture \eqref{eq:mixture_prior} may be seen as a smooth, variation diminishing approximation to $H_n$.

\begin{proposition}[Variation diminishing property]\label{cor:var}
For every density $f$ on $\mathbb{S}^1$, continuous on $R_{j,n}$, $j=0, \dots, 2n$, we have 
$\vcirc(T_nf-\alpha)\leq \vcirc(f-\alpha)$ for all $\alpha > 0$.
\end{proposition}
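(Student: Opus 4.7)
The plan is to chain Theorem \ref{thm:var}(i) with an elementary averaging argument. Writing $c_j = \int_{R_{j,n}} f(u)\,du$, we have $T_n f = \sum_{j=0}^{2n} c_j C_{j,n} \in \mathcal{C}_n$ with $(c_0,\ldots,c_{2n}) \in \Delta_{2n}$ (since $f$ is a density and the $R_{j,n}$ partition $\s^1$). Theorem \ref{thm:var}(i) then applies and yields
\[
\vcirc(T_n f - \alpha) \leq \vcirc\!\left(\tfrac{2n+1}{2\pi}\,c - \alpha\right),
\]
so everything reduces to controlling the cyclic variation of this finite sequence by $\vcirc(f-\alpha)$.

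The key observation is that, since $|R_{j,n}| = 2\pi/(2n+1)$, the $j$-th entry of the sequence above equals the average of $f-\alpha$ over $R_{j,n}$:
\[
\tfrac{2n+1}{2\pi}\,c_j - \alpha \;=\; \tfrac{1}{|R_{j,n}|}\int_{R_{j,n}}(f(u)-\alpha)\,du.
\]
Thus the claim becomes the natural fact that averaging over a cyclic partition cannot create sign changes: sign changes of the averaged sequence must already be present in $f-\alpha$.

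To make this precise, denote $\bar f_j = \frac{2n+1}{2\pi}c_j - \alpha$ and let $2k = \vcirc((\bar f_j)_{j=0}^{2n})$. Let $j_1 < j_2 < \cdots < j_{2k}$ be the indices at which the nonzero entries of this cyclic sequence change sign. For each such $j_\ell$, we have $\int_{R_{j_\ell,n}}(f-\alpha)\,du \neq 0$ with the same sign as $\bar f_{j_\ell}$; the continuity of $f$ on $R_{j_\ell,n}$ therefore yields a point $u_{j_\ell} \in R_{j_\ell,n}$ with $\mathrm{sign}(f(u_{j_\ell})-\alpha) = \mathrm{sign}(\bar f_{j_\ell})$. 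The $R_{j,n}$ being disjoint arcs in cyclic order on $\s^1$, the points $u_{j_1}, \ldots, u_{j_{2k}}$ admit representatives in $[0,2\pi)$ in the same cyclic order, and the sequence $(f(u_{j_\ell})-\alpha)_{\ell=1}^{2k}$ has exactly $2k$ cyclic sign changes by construction. By definition of $\vcirc(f-\alpha)$ as a supremum over such finite samplings, we conclude
\[
\vcirc(f - \alpha) \;\geq\; 2k \;=\; \vcirc\!\left(\tfrac{2n+1}{2\pi}c - \alpha\right),
\]
which combined with the Theorem \ref{thm:var}(i) bound completes the proof.

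The only subtle point is ensuring that the selected points really do lie in cyclic order on $\s^1$; this is automatic from the disjoint, cyclically ordered arcs $R_{0,n}, R_{1,n}, \ldots, R_{2n,n}$. Handling entries with $\bar f_j = 0$ in between sign-changing indices is harmless, since the definition of $\vcirc$ simply skips over zeros when counting sign changes, so restricting attention to the subsampling at the $j_\ell$'s loses no information.
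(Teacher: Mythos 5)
Your proof is correct and takes essentially the same route as the paper: reduce via Theorem~\ref{thm:var}\textit{(i)} to the cyclic variation of the coefficient vector, then observe that $\frac{2n+1}{2\pi}c_j-\alpha$ is the average of $f-\alpha$ over $R_{j,n}$ and select one representative point per cell. The only difference is cosmetic: the paper invokes the integral Mean Value Theorem to obtain $\frac{2n+1}{2\pi}c_j = f(u_j)$ exactly, whereas you argue at the level of signs only, which is equally valid (and in fact sidesteps a minor subtlety about applying the MVT on a half-open arc).
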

\begin{proof}
This is a straightforward consequence of Theorem~\ref{thm:var} \textit{(i)}. Indeed, by continuity of $f$, the Mean 
Value Theorem says that $\prob_f(R_{j,n})=\frac{2\pi}{2n+1}f(u_j)$, for some $u_j\in R_{j,n}$, $j=0,\ldots,2n$. It 
follows that
\begin{equation*}
\vcirc(T_nf-\alpha)\leq
\vcirc\left((\prob_f(R_{0,n}),\ldots,\prob_f(R_{2n,n}))-\alpha\right)\leq \vcirc\left(f-\alpha\right), \quad \alpha >0.
\end{equation*}
\end{proof}

\subsection{Strong posterior consistency}\label{sec:strong_consistency}

We show the strong posterior consistency of a general class of priors for bounded density spaces on compact metric spaces. These include sieve priors such as \eqref{eq:def_Pi}, as well as a class of Dirichlet process location mixtures (see \S \ref{sec:relationship_DPM}). In contrast with \cite{BD12b}, who also obtained general strong consistency result, we consider a prior specification framework, with a different applicability, that does not require continuity and positivity assumptions on the true density from which observations are made. 

Here, strong consistency on $\mathbb{F}$ means that if $X_1,\ldots, X_n$ are independent random variables and identically distributed according to the probability distribution $\prob_{f_0}$ with density $f_0 \in \bd$, denoted $(X_i)_{i \geq 1} \sim P_{f_0}^{(\infty)}$, then for all $\eps >0$,
\begin{equation}
\Pi\left(\left\{f \in \bd \,:\int |f-f_0|< \varepsilon\right\} \mid (X_i)_{i=1}^n\right) 
\rightarrow 1,\quad \prob_{f_0}^{(\infty)}\text{-a.s.}
\end{equation}

The general framework is the following. Suppose $\bd$ is the space of all bounded densities with respect to some finite measure $\mu$ on a compact metric space $(\ms,d)$. Let $T_n : L^1(\mathbb{M}) \rightarrow L^1(\mathbb{M})$, $n\in \mathbb{N}$, be a sequence of linear operators mapping densities to densities. Consider a model having the form $\mathcal{C} = \cup_{n \geq 0} \mathcal{C}_n$, with $\mathcal{C}_n := T_n(\mathbb{F}) \subset \mathbb{F}$. Let $\mathfrak{B}$ be the Borel $\sigma$-algebra of $\mathbb{F}$ for the $L^1$ metric and let $\mathfrak{B}_n$ be the restriction of $\mathfrak{B}$ to $\mathcal{C}_n$, $n \geq 0$. A prior $\Pi$ on $\bd$ can be specified through priors $\Pi_n$ on $(\mathcal{C}_n, \mathfrak{B}_n)$ and a distribution $\rho$ on $n \in \{0,1,2,\dots\}$ as
\begin{equation}\label{eq:sieve_prior}
	\Pi(B) = \sum_{n \geq 0} \rho(n) \Pi_n(B \cap \mathcal{C}_n), \quad B \in \mathfrak{B}. 
\end{equation}

In Theorem \ref{thm:strong_consistency} below, we give simple conditions on $\Pi_n$, $T_n$ and $\rho$, in this framework, ensuring strong posterior consistency on all of $\mathbb{F}$. The proof is given in the appendix.

\begin{theorem}
\label{thm:strong_consistency}
Let $\bd$, $\Pi_n$, $\Pi$ and $T_n$ be as above. Suppose that $T_n(\bd)\subset \bd$ are of finite dimensions bounded by an increasing sequence $d_n\in \mathbb{N}$, and also that $\|T_nf-f\|_{\infty}\to 0$, $n\to \infty$, for every continuous function $f$ on $\mathbb{M}$. If $0< \rho(n)< c e^{-Cd_n}$, for some $c >0$, $C > 0$ and if $\Pi_n$ has support $T_n(\bd)$, then the posterior distribution of $\Pi$ is strongly consistent on $\bd$.
\end{theorem}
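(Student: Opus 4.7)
The plan is to invoke the standard criterion for strong $L^1$-posterior consistency, which reduces the statement to verifying (I) that $f_0$ belongs to the Kullback--Leibler support of $\Pi$, namely $\Pi(\{f:\KL(f_0,f)<\eta\})>0$ for every $\eta>0$, and (II) that for every $\eps>0$ there exist sieves $\mathcal{F}_n\subset\bd$ with $\Pi(\mathcal{F}_n^c)\le c_1 e^{-c_2 n}$ and $L^1$-covering entropy $\log N(\delta,\mathcal{F}_n,\|\cdot\|_1)\le \alpha n$ for some $\delta\in(0,\eps)$ and some $\alpha$ sufficiently small in terms of $\eps$ and $c_2$.

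For (I), fix $\eta>0$ and first mollify $f_0$ to a density bounded away from zero: with $u=1/\mu(\ms)$ the uniform density, set $h_\xi=(1-\xi)f_0+\xi u$, so that $h_\xi\ge \xi u$ and $f_0/h_\xi\le (1-\xi)^{-1}$ on $\{f_0>0\}$, yielding $\KL(f_0,h_\xi)\le -\log(1-\xi)$, arbitrarily small. Next I would push $h_\xi$ into a sieve. Each $T_n$ has $L^1$-operator norm at most one, since it preserves the non-negative cone and maps densities to densities; combined with the uniform convergence $\|T_n g-g\|_\infty\to 0$ for continuous $g$ and the density of continuous functions in $L^1(\mu)$, a standard density argument gives $\|T_n f-f\|_1\to 0$ for every $f\in L^1(\mu)$, and in particular $T_n h_\xi\to h_\xi$ in $L^1$. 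Since $T_n h_\xi\ge \xi T_n u\ge \xi u/2$ for $n$ large (as $T_n u\to u$ uniformly), the elementary inequality $|\log(a/b)|\le |a-b|/\min(a,b)$ yields
\[
\bigl|\KL(f_0,T_n h_\xi)-\KL(f_0,h_\xi)\bigr|\;\le\; \tfrac{2\|f_0\|_\infty}{\xi u}\,\|T_n h_\xi-h_\xi\|_1\;\longrightarrow\;0,
\]
so $\KL(f_0,T_{n_0}h_\xi)<\eta/2$ for some $n_0$. Finally, on the finite-dimensional set $\mathcal{C}_{n_0}$ the $L^1$ and $L^\infty$ norms are equivalent; the full-support assumption on $\Pi_{n_0}$ then assigns positive $\Pi_{n_0}$-mass to a sufficiently small $L^\infty$-ball around $T_{n_0}h_\xi$, consisting of densities bounded below by $\xi u/4$, for which $\KL(f_0,\cdot)<\eta$ follows from continuity of $\KL$ in the uniform topology on densities bounded away from zero. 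Since $\rho(n_0)>0$, $\Pi$ places positive mass on the $\eta$-KL ball around $f_0$.

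For (II), set $\mathcal{F}_n=\bigcup_{m\le k_n}\mathcal{C}_m$, where $k_n$ is the largest integer with $d_{k_n}\le \gamma n$, and $\gamma>0$ will be tuned. Since $d_m$ is increasing and $\rho(m)\le ce^{-Cd_m}$, comparison with a geometric series gives
\[
\Pi(\mathcal{F}_n^c)\;=\;\sum_{m>k_n}\rho(m)\;\le\; c'e^{-Cd_{k_n+1}}\;\le\; c_1 e^{-C\gamma n}.
\]
Each $\mathcal{C}_m=T_m(\bd)$ lies in a linear subspace of dimension at most $d_m$ and consists of $L^1$-normalized non-negative functions, so a standard volumetric covering bound gives $\log N(\delta,\mathcal{C}_m,\|\cdot\|_1)\le d_m\log(5/\delta)$, whence $\log N(\delta,\mathcal{F}_n,\|\cdot\|_1)\le \log(k_n+1)+\gamma n\log(5/\delta)$. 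Given $\eps>0$, take $\delta=\eps/2$ and then choose $\gamma$ small enough that $\gamma\log(5/\delta)$ is smaller than both $\eps^2/2$ and $C\gamma$, so that both conditions of the consistency criterion hold simultaneously.

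The main obstacle is step (I): one must transfer a property stated only for continuous functions (the uniform convergence $T_n g\to g$) into a Kullback--Leibler approximation of an arbitrary bounded density, which may vanish on parts of $\ms$ and need not be continuous. The resolution is the two-stage mollification—perturbing $f_0$ by a small mixture with the uniform density to secure a uniform lower bound, then applying $T_n$ to land in a finite-dimensional sieve—together with the $L^1$-contractivity of $T_n$, which is what makes the displayed bound on the $\KL$ difference go to zero.
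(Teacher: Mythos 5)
Your proof is correct in essence and takes a genuinely different route from the paper. The paper invokes \cite{Xing2009} (Lemma~\ref{lem:xing}), with sieves $\mathcal{F}_n = \overline{\mathcal{C}_n}\cap_{k<n}\overline{\mathcal{C}_k}^c$ indexed by model dimension and a carefully chosen interpolating exponent $\alpha(\delta)=(1-\delta)^{-\log\delta}$; you instead invoke the classical Ghosal--Ghosh--Ramamoorthi sieve criterion with sieves $\mathcal{F}_n = \cup_{m\le k_n}\mathcal{C}_m$ growing with sample size. Both are valid routes and lead to the same entropy/prior-tail computation (your $(5/\delta)^{d_m}$ covering bound matches the paper's $(6/\delta)^{d_n}$ via Lorentz's lemma). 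For the Kullback--Leibler support, the paper splits into $\bd^+$ and $\bd\setminus\bd^+$ and handles the latter via \citet*[Lemma 5.1]{Ghosal:1999a}, which bounds $\KL(f_0,f)$ in terms of $\KL(f_1,f)$; you instead mollify directly to $h_\xi=(1-\xi)f_0+\xi u$, bound $\KL(f_0,h_\xi)\le -\log(1-\xi)$, and control $|\KL(f_0,T_nh_\xi)-\KL(f_0,h_\xi)|$ via the $L^1$-contractivity of $T_n$ and the lower bound on $T_nh_\xi$. Your approach is arguably more elementary (it avoids Ghosal's lemma), but it relies on the uniform density $u$ belonging to the model space $\bd$, which is true for the default $\bd$ of all bounded densities but would need the paper's star-shaped hypothesis to cover the restricted spaces mentioned in the remark. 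The implicit use of $\|T_n\|_{L^1\to L^1}\le 1$ and $T_n u\to u$ uniformly (so $T_nh_\xi\ge\xi u/2$ eventually) matches the paper's Claim~1 and the argument inside Claim~2.

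One slip: near the end of step (II) you write ``choose $\gamma$ small enough that $\gamma\log(5/\delta)$ is smaller than both $\eps^2/2$ and $C\gamma$.'' The second inequality simplifies to $\log(5/\delta)<C$, which does not involve $\gamma$ and cannot be enforced by shrinking $\gamma$. Fortunately it is also not needed: the GGR criterion requires only that the entropy rate $\beta=\gamma\log(5/\delta)$ be small relative to $\eps^2$ (say $\beta<\eps^2/8$) and that the prior tail decays at \emph{some} positive exponential rate $c_2=C\gamma>0$; there is no requirement that $\beta<c_2$. So dropping the spurious constraint leaves a complete and correct verification.
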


The proof is in Appendix \ref{appendix:proof_adaptative}.

\begin{remark}
The result still holds when the space $\bd$ is constrained such as being some convex subset of bounded densities containing at least one density that is bounded away from zero or a star-shaped subset around such a density (e.g. $\bd$ may be a set of bounded unimodal densities or a set of continuous multivariate copula densities). The precise conditions required on $\bd$ are stated at the beginning of Appendix \ref{appendix:proof_consistency}.
\end{remark}

\subsection{Relationship with Dirichlet Process Mixtures}\label{sec:relationship_DPM}

Here we consider Dirichlet Process location Mixtures on $\bd$ induced by the random density 
\begin{equation}\label{eq:DPM}
	f = \int_\mathbb{M} f(\cdot \mid \mu, n) \mathcal{D}(d\mu),
\end{equation}
where $\{f(\cdot \mid \mu, n) \mid  \mu \in \mathbb{M}\} \subset \bd$ are families of densities, $\mathcal{D}$ is a Dirichlet Process and $n$ follows some distribution $\rho$ on $\{1,2,3, \dots\}$. Our circular density prior \eqref{eq:mixture_prior} can be seen to take the form \eqref{eq:DPM} by letting $f(u \mid \mu, n) = \sum_{j=0}^{2n} \mathbb{I}_{R_{j,n}}(\mu) C_{j,n}(u)$. This point of view is especially useful in view of the Slice Sampler of \cite{Walker2007, Kalli2011} which is tailored to Dirichlet Process Mixtures (DPMs).

Furthermore, Theorem \ref{thm:strong_consistency} may be applied to a class of such DPMs. The idea is the following. In order to describe properties of \eqref{eq:DPM}, consider the linear operators $T_n$, $n \in \mathbb{N}$, which maps a probability measure $P$ on $\mathbb{M}$ to the density
\begin{equation}\label{eq:operator}
		T_n P = \int_\mathbb{M} f(\cdot \mid \mu, n) P(d\mu).
\end{equation}
If $P$ has some continuous density $p$, then it is natural to require that 
$
	\|T_n P - p\|_\infty \xrightarrow{n \rightarrow \infty} 0
$ 
(see e.g. assumption A2 in \cite{BD12b}).
If also the image under $T_n$ of all absolutely continuous probability measures is a finite dimensional space, then Theorem \ref{thm:strong_consistency} can be applied to ensure strong posterior consistency.

For instance, we can let 
\begin{equation}\label{eq:prior_pc}
    f(u \mid \mu, n) = C_{0,n}(u - \mu)
\end{equation} to obtain a Dirichlet process mixture over a continuous range of locations. The associated operator $T_n$ defined by \eqref{eq:operator}, when seen as acting on probability densities, is the De la Vallée Poussin mean of \cite{PS58}. Now for any density $f$ on $\mathbb{S}^1$, $T_n f$ is a trigonometric polynomial of degree $n$ \citep{PS58}. Hence the dimension of $T_n(\mathbb{F})$ is bounded above by $2n+1$. Following general theory about integral operators \citep{DL93}, it is straightforward to verify that $\|T_n f -f\|_\infty\rightarrow 0$ for all continuous $f$. Theorem \ref{thm:strong_consistency} is therefore immediately applied to obtain strong posterior consistency. 

In Section \ref{sec:comparison}, a prior of the type \eqref{eq:DPM} with densities given by \eqref{eq:prior_pc} is compared to our circular density prior \eqref{eq:mixture_prior}. Both yield very similar posterior mean estimates in our examples.

\subsection{Adaptative convergence rates}\label{sec:adaptative}

It is interesting to note that the framework of Section \ref{sec:strong_consistency} may be precised as to obtain adaptative convergence rates on classes of smooth densities, similarily as in \cite{Kruijer2008, Shen2015}. Again, the posterior convergence result is stated in some generality as to be easily applicable to other problems of similar nature. 

Here we write $a_n \asymp b_n$ if there are positive constants $A$ and $B$ such that $A b_n \leq a_n \leq B b_n$ for all large $n$. The posterior distribution of $\Pi$ is said to contract around $f_0$ at the rate $\varepsilon_n$ if $(X_i)_{i \geq 1} \sim P_{f_0}^{(\infty)}$ implies that for all large $L > 0$,
\begin{equation}
    \Pi\left(\left\{ f \in \mathbb{F} : H(f_0, f) < L \varepsilon_n \right\} \mid (X_i)_{i=1}^n \right) \rightarrow 1, \quad P_{f_0}^{(\infty)}\text{-a.s.}
\end{equation}
where $H(f_0, f) = \left(\int(\sqrt{f_0} - \sqrt{f})^2\right)^{1/2}$ is the Hellinger distance.

The following assumptions are made on the sequence of operators $T_n$ and on the distribution $\rho$ which induces the prior $\Pi$ defined by \eqref{eq:sieve_prior} with $\Pi_n$ priors on the submodels $T_n(\mathbb{F})$. The proof of Theorem \ref{thm:adaptative_convergence} is in the appendix.

\begin{description}
    \item[\textbf{A1}] The sequence of linear operators $T_n : L^1(\mathbb{M}) \rightarrow L^1(\mathbb{M})$ with $T_n(\mathbb{F}) \subset \mathbb{F}$ maps densities to densities and is such that $\|T_n 1 - 1\|_\infty \rightarrow 0$ for the constant function $1$.
    \item[\textbf{A2}] There exists $d_n \in \mathbb{N}$ an increasing integer sequence with $d_n \geq \dim(T_n(\mathbb{F}))$ and satisfying $d_n \asymp n^d$ for some $d \geq 1$.
    \item[\textbf{A3}] The distribution $\rho$ on $\mathbb{N}$ satisfies $\log(\rho(n)) \asymp -d_n \log(d_n)$.
\end{description}

\begin{theorem}\label{thm:adaptative_convergence}
    Suppose that \textbf{A1}, \textbf{A2} and \textbf{A3} are satisfied. Let $f_0\in \mathbb{F}$ be such that $\|\log f_0\|_\infty < \infty$, $\|T_n f_0 - f_0\|_\infty = \mathcal{O}(n^{-\beta})$ for some $\beta > 0$ and suppose there exists $\kappa > 0$, $\varepsilon_0 > 0$ such that for every large $n \in \mathbb{N}$ and every $0 < \varepsilon < \varepsilon_0 / d_n$,
    \begin{equation}\label{eq:prior_concentration}
        \Pi_n\left(\left\{ f\in T_n(\mathbb{F}) : \| f - T_n f_0 \|_\infty \leq \varepsilon \right\}\right) \geq \left(\varepsilon/d_n\right)^{\kappa d_n}.
    \end{equation}
    Then the posterior distribution of $\Pi$ contracts around $f_0$ at the rate $\varepsilon_n = (n/\log(n))^{-\beta/(2\beta + d)}$.
\end{theorem}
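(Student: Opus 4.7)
My plan is to verify the three standard conditions of the Ghosal--Ghosh--van der Vaart posterior contraction framework and then invoke the corresponding rate theorem: a lower bound on the prior mass of a Kullback--Leibler neighbourhood of $f_0$, a sieve $\mathcal{F}_n \subset \bd$ with controlled Hellinger entropy, and exponentially small prior mass outside this sieve. The backbone of the argument is the calibration $n^* \asymp \varepsilon_n^{-1/\beta} \asymp (n/\log n)^{1/(2\beta+d)}$, for which the assumed approximation rate gives $\|T_{n^*} f_0 - f_0\|_\infty \lesssim \varepsilon_n$ and, combining \textbf{A2}--\textbf{A3} with $\log d_{n^*} \asymp \log n$, the identity $d_{n^*} \log d_{n^*} \asymp n \varepsilon_n^2$.

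For the prior mass bound I would first use $\|\log f_0\|_\infty < \infty$ to secure $f_0 \geq \eta > 0$, whence $T_{n^*} f_0 \geq \eta/2$ for large $n$ by uniform approximation. A standard Taylor expansion then shows that any density $f$ with $\|f - T_{n^*} f_0\|_\infty \leq c\varepsilon_n$ (for suitably small $c$) lies in the Kullback--Leibler neighbourhood $\{f : \KL(f_0, f) \leq \varepsilon_n^2,\ \var_{f_0}\log(f_0/f) \leq \varepsilon_n^2\}$. The concentration hypothesis \eqref{eq:prior_concentration} applied at radius $\varepsilon \asymp \varepsilon_n$ yields $\Pi_{n^*}$-mass at least $(\varepsilon_n / d_{n^*})^{\kappa d_{n^*}}$, whose logarithm is of order $-d_{n^*}\log d_{n^*} \asymp -n\varepsilon_n^2$ by the calibration; multiplying by $\rho(n^*) \geq e^{-C d_{n^*} \log d_{n^*}}$ from \textbf{A3} closes this step.

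For the sieve I would fix a large constant $K$, let $k_n$ be the smallest integer with $d_{k_n} \geq K d_{n^*}$, and set $\mathcal{F}_n = \bigcup_{k \leq k_n} T_k(\bd)$. Each $T_k(\bd)$ lies in a $d_k$-dimensional linear subspace of $L^1$ and consists of probability densities, hence has $L^1$-diameter at most $2$; a standard volumetric estimate then yields $L^1$-covering numbers of $T_k(\bd)$ at scale $\eta$ bounded by $(C/\eta)^{d_k}$. Since $H \leq \|\cdot\|_1^{1/2}$, this gives $\log N(\varepsilon_n, \mathcal{F}_n, H) \lesssim d_{k_n} \log(1/\varepsilon_n) \asymp n \varepsilon_n^2$, while the complement mass $\Pi(\mathcal{F}_n^c) = \sum_{k > k_n} \rho(k) \lesssim e^{-C d_{k_n} \log d_{k_n}}$ can be made smaller than $e^{-c_3 n \varepsilon_n^2}$ with $c_3$ arbitrarily large by taking $K$ sufficiently large. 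Applying the posterior rate theorem of \cite{Ghosal2001} then delivers the Hellinger contraction rate $\varepsilon_n$.

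The main obstacle is the entropy step. Finite-dimensionality of $T_k(\bd)$ alone does not grant the classical $d_k \log(M/\varepsilon)$ Hellinger entropy bound, since densities in $T_k(\bd)$ have no \emph{a priori} uniform sup-norm bound; the key technical move is to route through $L^1$-entropy, where the universal $L^1$-diameter bound for families of densities applies, and then convert to Hellinger via $H \leq \|\cdot\|_1^{1/2}$ at the cost of only a constant factor in the logarithm.
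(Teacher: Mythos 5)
Your proposal follows the standard Ghosal--Ghosh--van der Vaart route (KL prior mass, sieve with controlled Hellinger entropy, small mass outside the sieve), whereas the paper instead applies a theorem of Xing (2011) that merges the entropy and sieve-complement conditions into a single summability criterion of the form $\sum_n e^{-c_1 n\tilde\varepsilon_n^2}\sum_j N(\tilde\varepsilon_n,\mathcal{F}_j;H)^{1-\alpha}\Pi(\mathcal{F}_j)^\alpha < \infty$ with a tempering parameter $\alpha$ close to one, working with the layered partition $\mathcal{F}_j = \overline{\mathcal{C}_j}\cap\bigcap_{k<j}\overline{\mathcal{C}_k}^c$ rather than a single growing sieve $\bigcup_{k\leq k_n}T_k(\mathbb{F})$. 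Both frameworks deliver the same almost-sure rate; the paper's choice avoids a separate sieve-exhaustion step and reuses the set-up from the strong-consistency proof, while your GGV version is more familiar and arguably more transparent. Your key technical move, routing the entropy bound through $L^1$ via $H\leq\|\cdot\|_1^{1/2}$ to avoid requiring a uniform sup-norm bound on $T_k(\mathbb{F})$, is exactly the device the paper uses ($H(f,g)^2 \leq \int|f-g|\,d\mu$ combined with Lorentz's covering lemma).

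There is one genuine, though repairable, gap. You apply the concentration hypothesis \eqref{eq:prior_concentration} at the radius $\varepsilon\asymp\varepsilon_n$, but this hypothesis only holds for $\varepsilon<\varepsilon_0/d_{n^*}$. Since $d_{n^*}\asymp\varepsilon_n^{-d/\beta}$, the product $\varepsilon_n d_{n^*}\asymp\varepsilon_n^{1-d/\beta}$ diverges whenever $d>\beta$, so the constraint is violated in precisely the low-smoothness regime that the theorem must cover. The paper handles this by shrinking the radius to $\varepsilon_n^{1+d/\beta}$, for which $\varepsilon_n^{1+d/\beta}d_{n^*}\asymp\varepsilon_n\to 0$, while the resulting prior mass $(\varepsilon_n^{1+d/\beta}/d_{n^*})^{\kappa d_{n^*}}$ still has logarithm of order $-d_{n^*}\log d_{n^*}\asymp -n\varepsilon_n^2$, so nothing is lost in the rate. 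Your argument should be amended accordingly: use the smaller radius in invoking \eqref{eq:prior_concentration}, noting that the corresponding sup-norm ball is a fortiori contained in the Kullback--Leibler neighbourhood of radius $\varepsilon_n^2$.

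Two smaller points. First, the claim $\Pi(\mathcal{F}_n^c)\lesssim e^{-Cd_{k_n}\log d_{k_n}}$ requires the tail sum $\sum_{k>k_n}e^{-Cd_k\log d_k}$ to be dominated by its first term; this does hold here since $d_k\asymp k^d$ makes the terms decay super-geometrically, but it deserves a sentence. Second, you should confirm that the version of the GGV contraction theorem you invoke gives the $P_{f_0}^{(\infty)}$-almost-sure conclusion demanded by the statement, since the most basic formulations give only convergence in probability; the citation to Ghosal (2001) suffices, but the distinction is worth flagging explicitly.
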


\begin{remark}\label{remark:prior_concentration}
    In order to verify \eqref{eq:prior_concentration}, suppose as in \eqref{eq:Cn} that $$T_n(\mathbb{F}) = \left\{\sum_{j=0}^{d_n} c_{j,n} \phi_{j,n} \mid (c_{j,n})_{j=0}^{d_n} \in \Delta_{d_n}\right\}$$ for some families of basis functions $\{\phi_{j,n}\}_{j=0}^{d_n}$ with $\max_{j} \|\phi_{j,n}\|_\infty \leq C d_n $ for some $C > 0$ that does not depend on $n$. Writing $f = \sum_{j=0}^{d_n} c_{j,n}\phi_{j,n}$ and $T_n f_0 = \sum_{j=0}^{d_n} c_{j,n}^{(0)} \phi_{j,n}$, we find $\|f - T_n f_0\|_\infty \leq C d_n \sum_{j=0}^{d_n} |c_{j,n} - c_{j,n}^{(0)}|$. Now consider a Dirichlet distribution $P$ on the coefficients $(c_{j,n})_{j=0}^{d_n}$ with parameters $(\alpha_{j,n})_{j=0}^{d_n}$ satisfying $\sum_{j=0}^{d_n} \alpha_{j,n} = \alpha$ and $a d_{n}^{-1} < \alpha_{j,n} < b$ for some positive constants $\alpha$, $a$ and $b > 1$ that do not depend on $n$. An application of Lemma A.1 of \cite{Ghosal2001} yields that for every $0 < \varepsilon < \min\{1, 2C/b\}$ and $d_n \geq 2$,
    \begin{align*}
         \Pi_n\left(\left\{ f\in T_n(\mathbb{F}) : \| f - T_n f_0 \|_\infty \leq \varepsilon \right\}\right)
          &\geq P( \{ (c_{j,n})_{j=0}^{d_n} : \sum_{j=0}^{d_n} |c_{j,n} - c_{j,n}^{(0)}| \leq (C d_n)^{-1}\varepsilon \} )\\
          &\geq (\varepsilon/d_n)^{\kappa d_n}
    \end{align*}
    for some $\kappa > 0$ that does not depend on $n$.
\end{remark}

\begin{remark}
    In the case where $f_0 \in T_k(\mathbb{F})$ for some $k \in \mathbb{N}$, the use of $T_nf_0$ to control the approximation error to the sieves may be suboptimal. In this case, it is possible to obtain convergence rates of the order of $(n/\log(n))^{-1/2}$. See for instance \cite{Ghosal2001, Kruijer2008, Barrientos2015}.
\end{remark}

\begin{remark}
    The work in this section shares similarities to \cite{Shen2015} who also obtained general adaptative contraction rates of posterior distributions for a class of random series priors. The reader is refered to \cite{Petrone2010} for a different generalization of the random Bernstein polynomials that is also based on constructive approximation techniques.
\end{remark}

\subsubsection{Application to a circular density prior}

Let us continue the example of Section \ref{sec:relationship_DPM}, where the prior $\Pi$ on the space of all bounded circular densities is a Dirichlet Process location Mixture of $C_{0,n}$ with a distribution $\rho$ on $n\in \mathbb{N}$. The corresponding operator $T_n$ is defined in \eqref{eq:operator} using the densities \eqref{eq:prior_pc}. If $\rho$ is chosen so that $\log(\rho(n)) \asymp -n\log(n)$ and the base distribution of the Dirichlet Process is uniform on $\mathbb{S}^1$ with concentration parameter $\alpha > 0$, Theorem \ref{thm:adaptative_convergence} is easily applied as to obtain the rate of convergence $(n/\log(n))^{-\beta / (2\beta + 2)}$ when $f_0$ is such that $\|\log f_0\|_\infty < \infty$ and satisfies the Hölder continuity condition
$$
    \sup_{x,y \in \mathbb{S}^1} \frac{|f_0(x) -f_0(y)|}{d_{\mathbb{S}^1}(x, y)^\beta} < \infty
$$
for some $\beta \in (0,1]$. Indeed, the operator $T_n$ satisfies the hypothesis \textbf{A1} of Theorem \ref{thm:adaptative_convergence} and \textbf{A2}-\textbf{A3} have already been show to hold. Using Remark \ref{remark:prior_concentration} and the fact that the distribution $\Pi_n$ on the image of $T_n$ corresponds to a Dirichlet distribution on the coefficients of the mixture $\sum_{j=0}^{2n} c_{j,n} C_{j,n}$ with parameters $\alpha_{j,n} = \frac{\alpha}{2n+1}$, we obtain that \eqref{eq:prior_concentration} is satisfied. Furthermore, \cite*[eq. (8.6), Chapter 9]{DL93} shows that $\|T_n f_0 - f_0\|_\infty = \mathcal{O}(\omega_{f_0}(n^{-1/2}))$, where $\omega_{f_0}$ is the modulus of continuity of $f_0$ defined as
$$
\omega_{f_0}(\delta) = \sup \left\{|f_0(x) - f_0(y)|: x, y \in \mathbb{S}^1,\, d_{\mathbb{S}^1}(x, y) < \delta \right\}.
$$
We thus obtain the stated convergence rate $\varepsilon_n = (n/\log(n))^{-\beta/(2\beta + 2)}$ which is, up to log factors, the same as in the case of the random Bernstein polynomial prior \citep{Kruijer2008} for $\beta \in (0,1]$. In the case where $f_0$ is continuously differentiable with $f_0'$ satisfying the Hölder continuity condition with parameter $\alpha \in (0,1]$, then \cite*[eq. (8.6), Chapter 9]{DL93} together with \cite*[eq. (7.13), Chapter 2]{DL93} shows that $\|T_n f_0 - f_0\|_\infty = \mathcal{O}(n^{-(1+\alpha)/2})$. This yields the posterior contraction rate $\varepsilon_n = (n/\log(n))^{-(1+\alpha)/(2(1+\alpha)+2)}$ which is again the same, up to log factors, as for the random Bernstein polynomial prior \citep{Kruijer2008}.
Similar arguments may be used to obtain contraction rates in the case of the De la Vallée Poussin prior \eqref{eq:mixture_prior}.

\section{Comparison of density estimates}\label{sec:comparison}

In this section, we compare density estimates based on the De la Vallée Poussin basis and the nonnegative trigonometric sums of \cite{FD04}. Focus is on the expected Kullback-Leibler and $L^1$ losses in the estimation of target densities exhibiting a range of smoothness, skewness and multimodal characteristics.

\subsection{Nonnegative trigonometric sums}\label{sec:nnts}
Trigonometric polynomials that are probability density functions on the circle can be parameterized by the surface of a complex hypersphere \citep{FD04}. A circular distribution of the corresponding family takes the form
\begin{equation}\label{eq:nnts}
	f(u; c_0, \dots, c_M) = \left\| \sum_{k=0}^M c_k e^{iku} \right\|^2,
\end{equation}
where the coefficients $c_k$ are complex numbers such that $\sum_{k=0}^M \|c_k\|^2 = \frac{1}{2\pi}$.

The parameterization \eqref{eq:nnts} is exploited in \cite{FD04, FD07, FD2010, FD2014a, FD2014b} to model distributions of circular random variables. Circular density estimates from i.i.d. samples are obtained therein by maximum likelihood. Goodness of fit for different degrees $M$ of the trigonometric polynomials is assessed using Akaike's information criterion (AIC) and the Bayesian information criterion (BIC). Recently, \cite{FD2016} considered a uniform prior on the coefficients $c_k$, with respect to hyperspherical surface measure for the Bayesian analysis of circular distributions.

\subsection{Methods}\label{sec:methods}

The following five estimates of circular densities, denoted \textit{pd}, \textit{pc}, \textit{nAIC}, \textit{nBIC} and \textit{fdbayes}, are compared.
\begin{enumerate}
    \item[\textit{pd}:] The posterior mean estimate based on the De la Vallée Poussin prior \eqref{eq:mixture_prior}. This prior is parameterized by a Dirichlet process $\mathcal{D}$ and a probability distribution $\rho$ on $\mathbb{N}$. We chose $\mathcal{D}$ to be centered on the circular uniform distribution with concentration parameter $\alpha = 1$, and we let $\rho(n) \propto e^{-n/5}$.
    \item[\textit{pc}:] The posterior mean estimate based on the Dirichlet process location mixture \eqref{eq:prior_pc}. This prior is also parameterized by a Dirichlet process and a distribution $\rho$ on $\mathbb{N}$. We use the same hyperparameters as above.
    \item[\textit{nAIC}:] The maximum likelihood estimate of \eqref{eq:nnts} where the dimension $M$ is chosen as to minimize Akaike's information criterion.
    \item[\textit{nBIC}:] The maximum likelihood estimate of \eqref{eq:nnts} where the dimension $M$ is chosen as to minimize the Bayesian information criterion.
    \item[\textit{fdbayes}:] The posterior mean estimate based on a uniform hyperspherical distributions on the coefficients $c_k$ of \eqref{eq:nnts} and a uniform prior on $\{0,1,2,\dots, 5\}$ for the dimension $M$. This prior on $M$, uniform on a range $\{0,1,\dots, m\}$ of values, is suggested in \cite{FD2016}. The value of $m=5$, also suggested therein, was chosen as to provide the best performance of this estimator in the comparison of Section \ref{sec:results}.
\end{enumerate}

We assess the quality of a density estimate $f$ using the Kullback-Leibler loss defined by $\int_{\mathbb{S}^1} \log\left(\frac{f_0(u)}{f(u)}\right)f_0(u) du$, where $f_0$ is the target density \citep{Kullback1951}, as well as the $L^1$ loss defined by $\int_{\mathbb{S}^1} |f_0(u) -f(u)|du$. This Kullback-Leibler loss is appropriate in the context of discrimination between density estimates \citep{Hall1987}, while the $L^1$ loss is relevant in view of Theorem \ref{thm:strong_consistency}. Results obtained using the $L^2$ and Hellinger losses were highly similar to those using the $L^1$ loss and we omit their presentation.

\subsubsection{Target densities}
We consider the following two families of target densities to be estimated.
\begin{enumerate}
    \item The \textit{Skewed von Mises} family parameterized by $\alpha \in [0,1]$ and with densities 
    $$v_\alpha(u) \propto (1+\alpha \sin(u+1))\exp(3\alpha \cos(u-\pi)).$$
    \item The family parameterized by $\alpha \in [0, 2\pi)$ and with densities $$w_\alpha(u) \propto \exp(\sin(\cos(2 u) + \sin(3 u) + \alpha)),$$
        which we will refer to as the $w$-family.
\end{enumerate}
The first family was obtained by applying the skewing technique of \cite{Abe2011} to von Mises circular densities and the second family was chosen to showcase multimodal characteristics. This is illustrated in Figure \ref{fig:densities}.

\begin{figure}
    \centering
    \caption{The \textit{Skewed von Mises} family of densities (left panel) and the $w$-family of densities (right panel).}
    \includegraphics[scale=0.6]{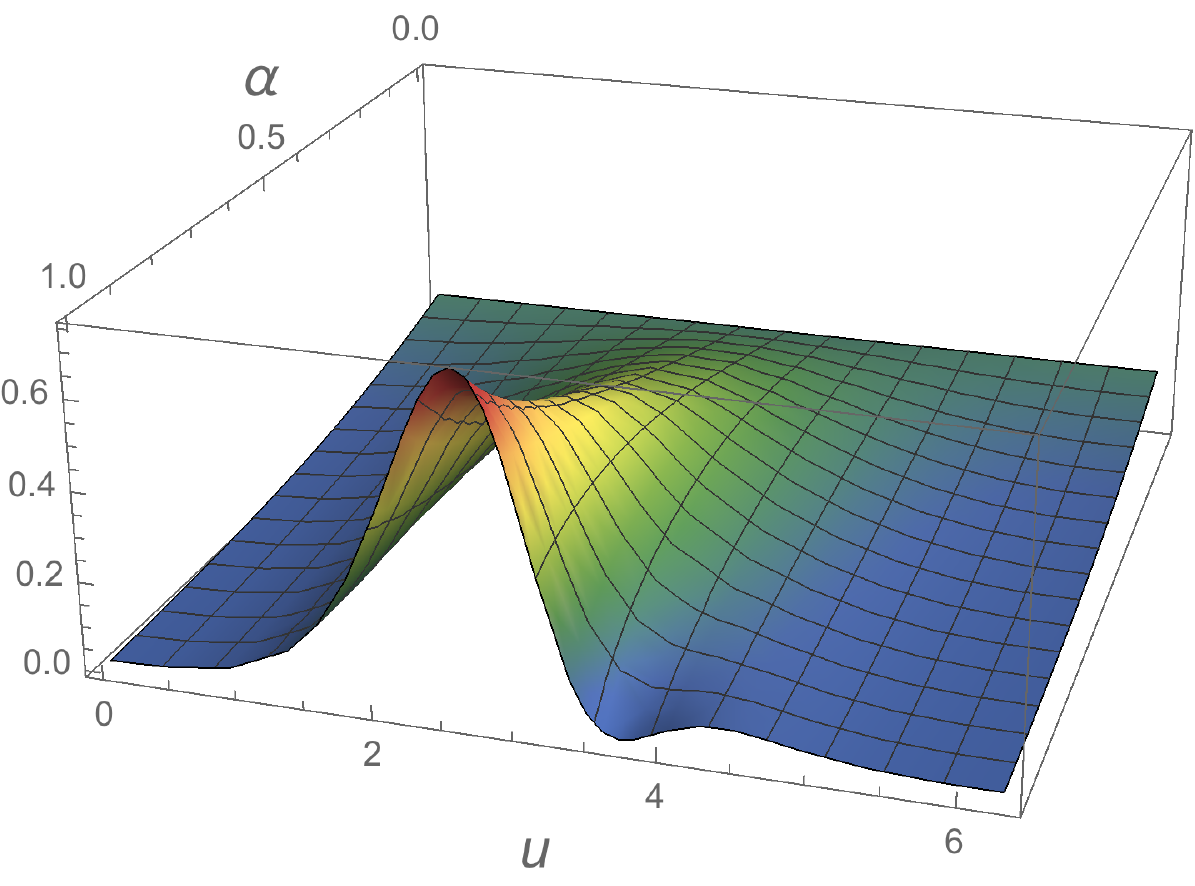}
    \includegraphics[scale=0.6]{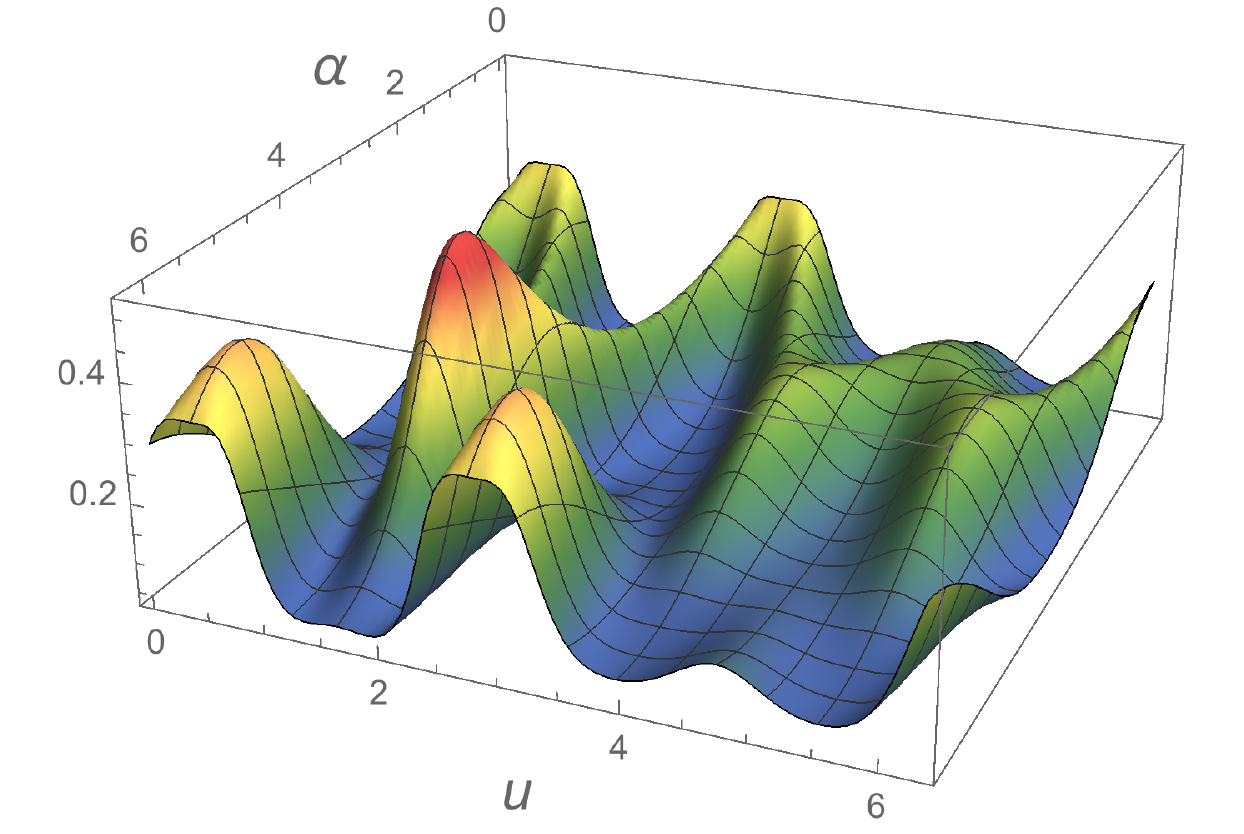}
    \label{fig:densities}
\end{figure}

\subsection{Results}\label{sec:results}

We estimated the mean Kullback-Leibler loss in 1000 repetitions of the estimation of our target densities, for a range of parameter values, using independent samples of sizes $30$ and $100$. The results are shown in Figure \ref{fig:res_skewed_vm} and Figure \ref{fig:res_chaotic}. Bootstrap confidence intervals at the 95\% level are illustrated by vertical bars.

Under the Kullback-Leibler loss, the \textit{nAIC} and \textit{nBIC} estimators are at a considerable disadvantage in the examples considered herein. This is due to their tendency of underestimating probabilities in regions where few samples are observed. An important exception to this, however, is in the use of the the \textit{nBIC} method to estimate a constant densities, since it typically selects $M=0$ or $M=1$ in this case and stays bounded away from zero. 

The Bayesian averaging methods \textit{pc}, \textit{pd} and \textit{fdbayes} are generally more appropriate under the Kullback-Leibler loss and all three are competitive. The \textit{fdbayes} estimator has a poorer performance in the estimation of a spiked unimodal density (\textit{Skewed von Mises} with parameter $\alpha$ near $1$), but improves as the target density approaches being constant.

The \textit{nAIC} estimator improves under a $L^1$ loss. Its increased flexibility over \textit{nBIC} allows to better approach the target in regions of high probability density. The ordering of the estimators is otherwise roughly similar. Under a sample size of size $100$, the different estimators are more clearly distinguished and the \textit{pc} and \textit{pd} estimators provide the best overall performance.

\begin{remark}
    These results show that the De la Vallée Poussin densities provide a viable alternatives to the nonnegative trigonometric sums of \cite{FD04} and that they can be used to adapt techniques developped on the unit interval, such as the random Bernstein polynomials of \cite{Petrone1999a, Petrone2002}, to the topology of the circle. However, it is not our goal to provide best-possible estimators. It would be required to adapt the basis densities as in \cite{Kruijer2008} in order to obtain certain minimax-optimal Hellinger convergence rates. Our theoretical results can also be applied when using different density bases, including for multivariate density estimation, and the shape-preserving properties of the De la Vallée Poussin densities can be used to incorporate prior information.
\end{remark}

\begin{figure}
\caption{Mean Kullback-Leibler losses for the \textit{Skewed von Mises} family $\{v_\alpha\}$ of target densities and different values of the parameter $\alpha$.}
	\includegraphics[width=\linewidth]{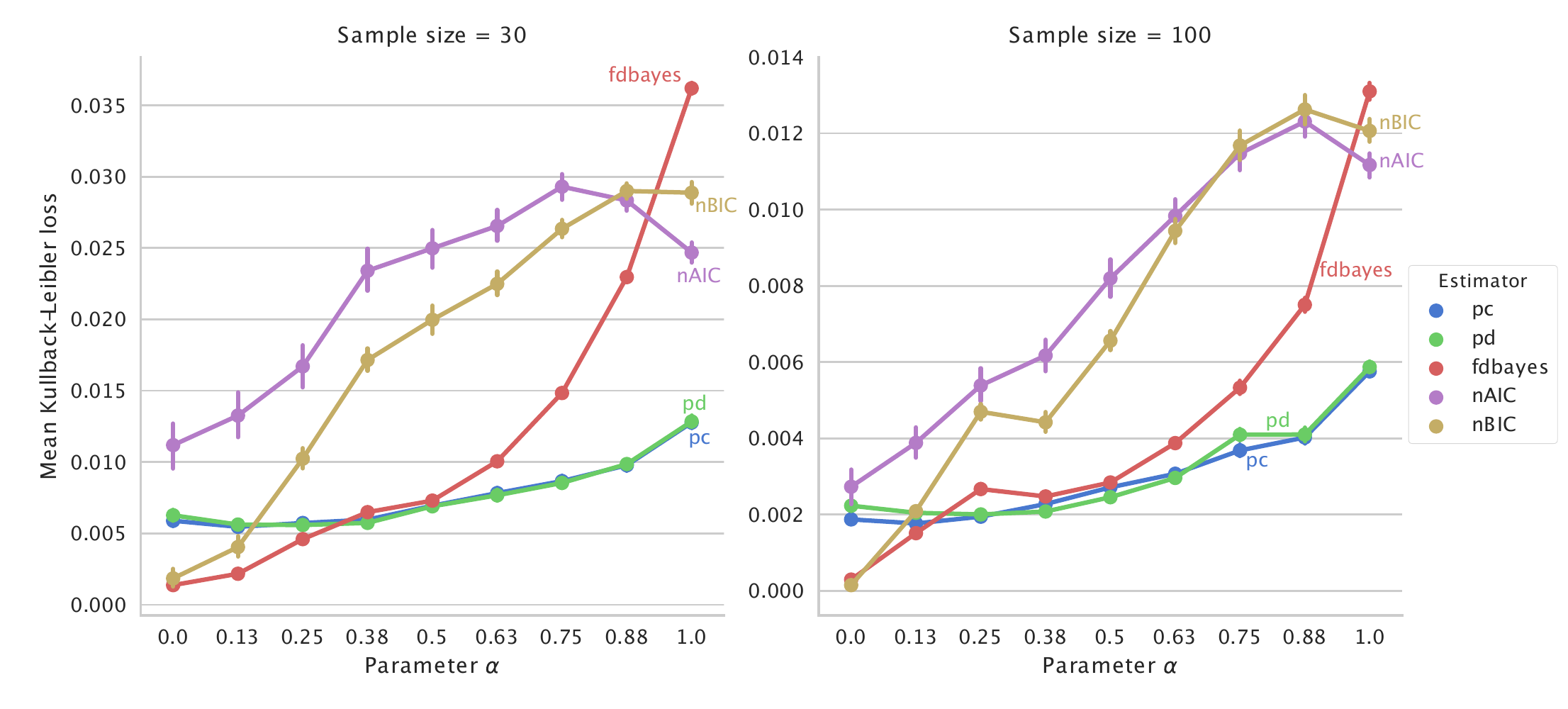}
	\includegraphics[width=\linewidth]{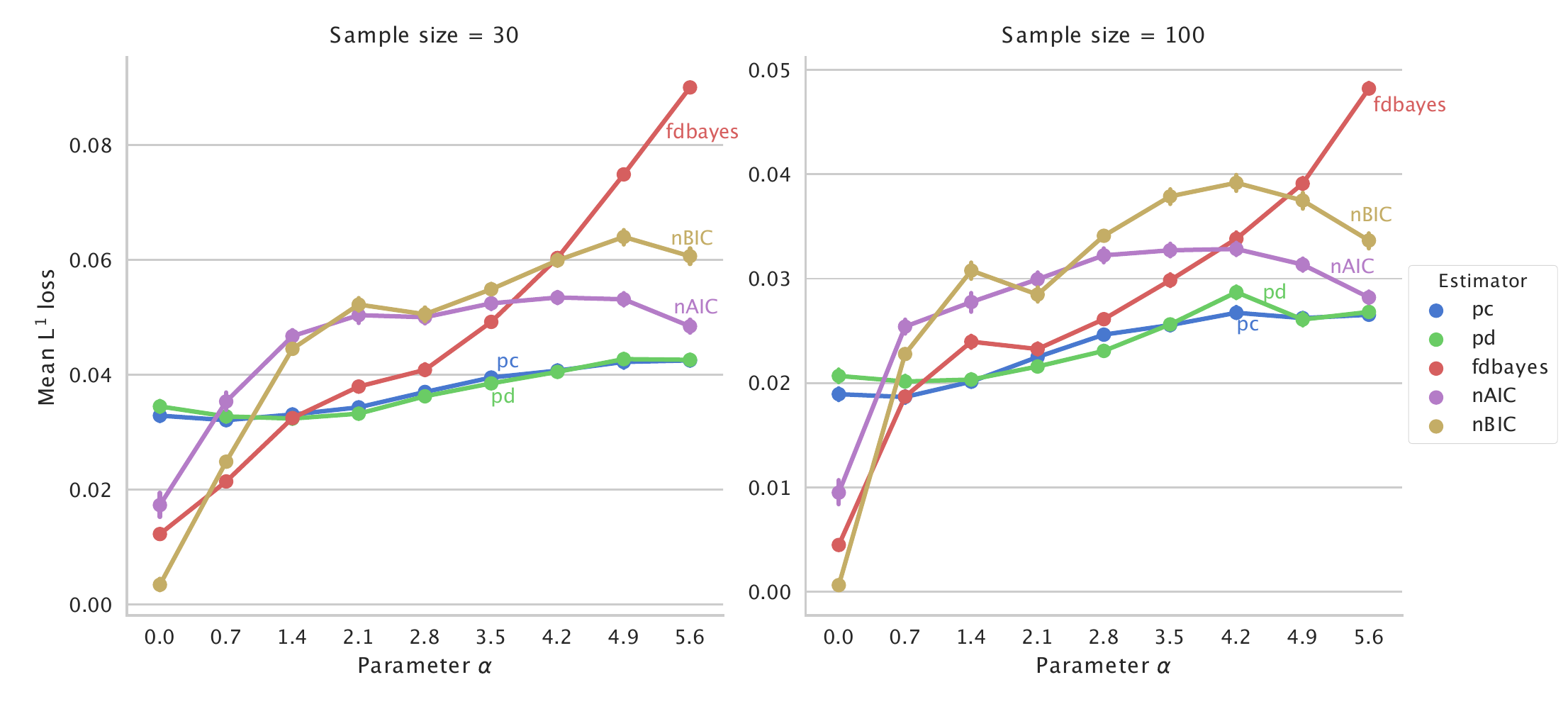}
	\label{fig:res_skewed_vm}
\end{figure}

\begin{figure}
\caption{Mean Kullback-Leibler losses for the $w$-family $\{w_\alpha\}$ of target densities and different values of the parameter $\alpha$.}
	\includegraphics[width=\linewidth]{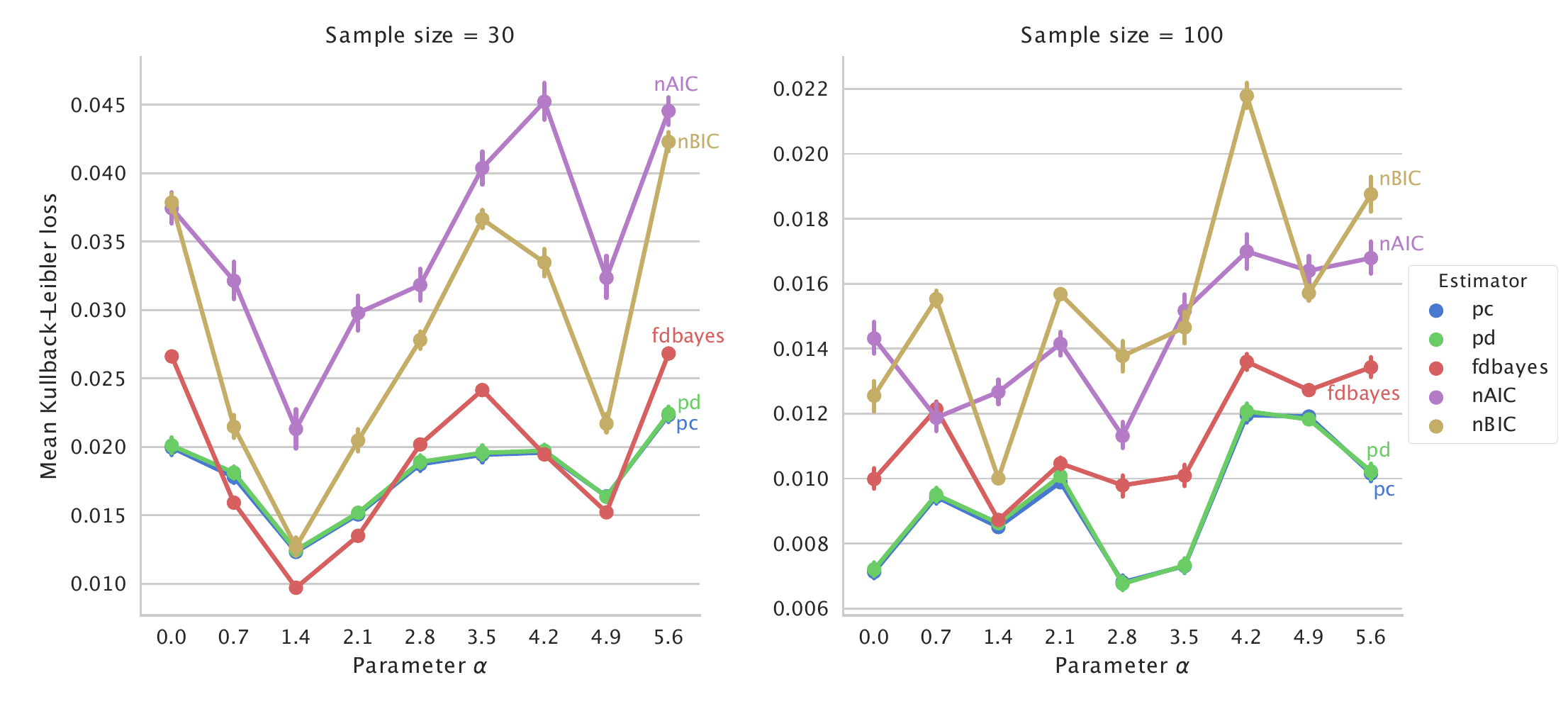}
	\includegraphics[width=\linewidth]{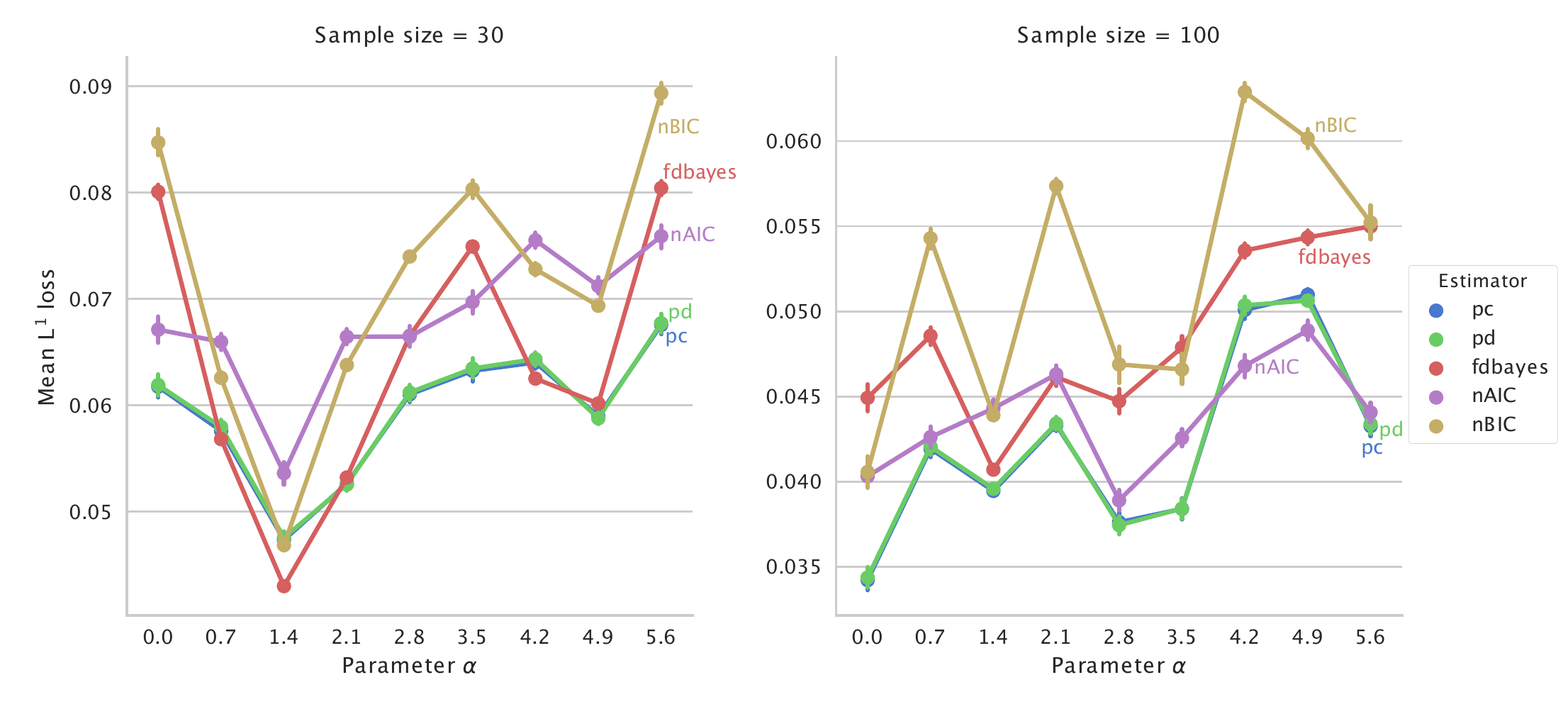}
	\label{fig:res_chaotic}
\end{figure}

\subsection{Implementation summary}\label{sec:implementation}

\begin{figure}
\caption{Examples of density estimates for different targets and sample sizes.}
    \centering
    \includegraphics[width=\linewidth]{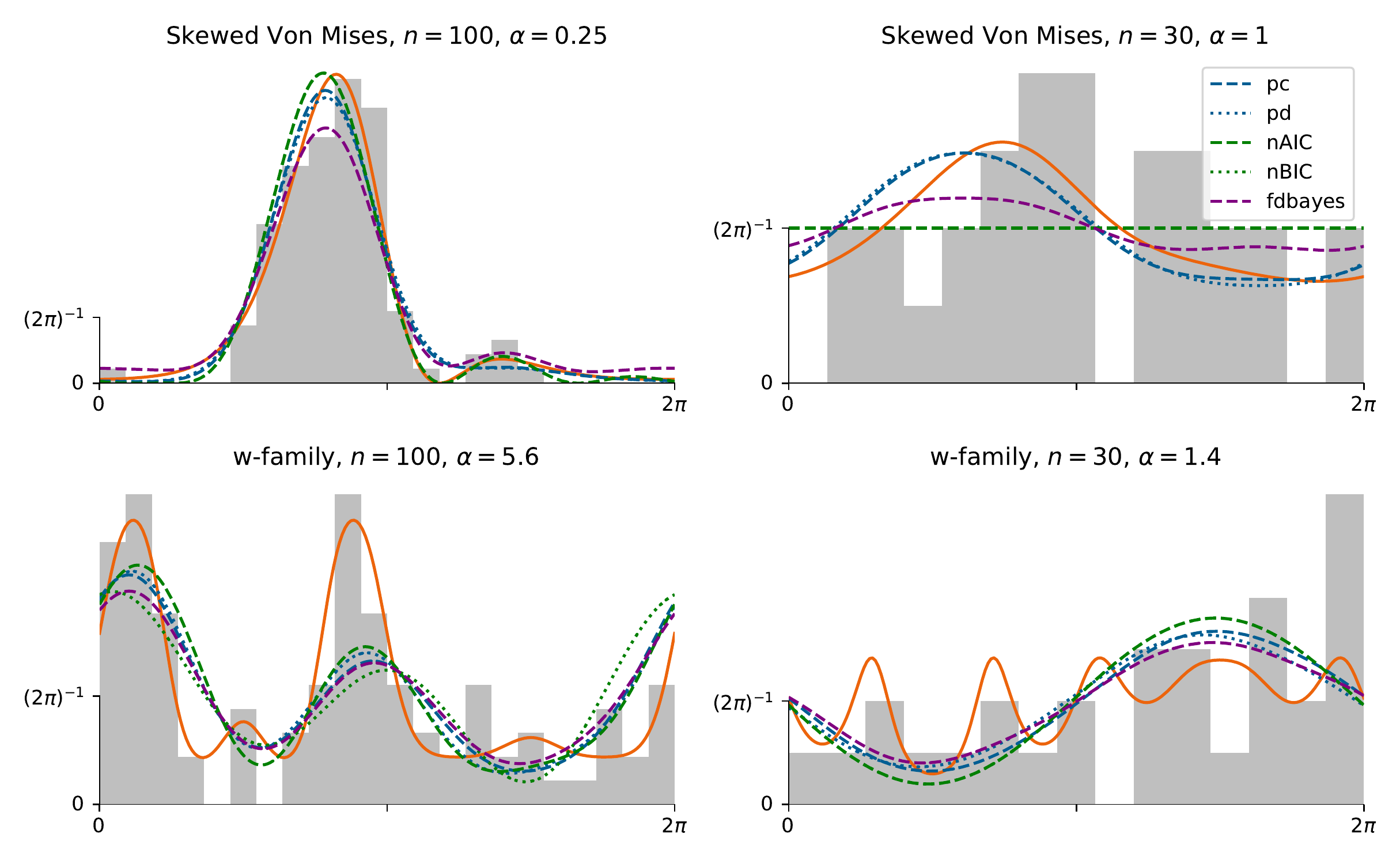}
    \label{fig:est_examples} 
\end{figure}

The \textit{nAIC} and \textit{nBIC} density estimates are obtained using the CircNNTSR R package \citep{JSSv070i06}. Precisely, we ran the function ``nntsmanifoldnewtonestimation'' twice from random starting points provided by ``nntsrandominitial'' and for each degree $M$ of the trigonometric polynomials ranging in $\{0,1,\dots, 7\}$. Density estimates with the best AIC and BIC scores were retrieved.

Posterior means corresponding to the \textit{pc} and \textit{pd} estimates are approximated using the Slice Sampler described in \cite{Kalli2011}. The implementation is straightforward. We ran 80 thousand iterations of the algorithm, of which 20 thousand were treated as burn-in, and sub-sampled down to 20 thousand iterations in order to calculate the posterior mean. Each iteration consisted in the update of every variable in the Slice Sampler following their full conditional distribution. The distribution of the model dimension $n$ was truncated to the range $\{1, 2, 3, \dots, 60\}$.

Posterior means for the \textit{fdbayes} estimates are approximated using a simple independent Metropolis-Hastings algorithm with trans-dimensional moves that naturally exploit the nestedness of the models. We ran the algorithm for a million iterations, treating 100 thousand as burn-in, and sub-sampled down to 20 thousand observations in order to calculate the posterior mean. This large number of iterations was used to ensure convergence across the 7200 different datasets and to compensate for the lower acceptance rate of independent Metropolis-Hastings.

\section{Discussion}
    
We introduced the density basis $C_{j,n}$, $j \in \{0,1,\dots, 2n\}$, of the trigonometric polynomials. It is well suited to mixture modelling in the sense that different characteristics of the mixture density $ f = \sum_{j=0}^{2n} c_{j,n} C_{j,n}$ can be easily related to the vector $c = (c_{0,n}, c_{1,n}, \dots, c_{2n,n})$ of coefficients. For instance, Theorem \ref{thm:var} shows that $f$ is constant if and only if $c$ is constant; that it is periodically unimodal if $c$ is periodically unimodal; and that the range of $f$ is contained between $\frac{2n+1}{2\pi} \min\{c_{j,n}\}_{j=0}^{2n}$ and $\frac{2n+1}{2\pi} \max\{c_{j,n}\}_{j=0}^{2n}$. From the cyclic symmetry of the basis, it also follows that $f$ is symmetric about $0$ if the vector $(c_{n+1,n}, \dots, c_{2n, n}, c_{0,n},c_{1,n}, \dots, c_{n,n})$ is symmetric about its center coefficient $c_{0,n}$. As yet another example, consider the problem of modelling a bivariate angular copula density $g: \mathbb{S}^1 \times \mathbb{S}^1 \rightarrow [0,\infty)$. Using the De la Vallée Poussin basis, we may let $g(u,v) = \sum_{i,j = 0}^{2n} c_{i,j} C_{i,n}(u)C_{j,n}(v)$. The fact that $g$ has constant marginal densities follows if the row sums and column sums of the matrix of coefficients $[c_{i,j}]_{i,j}$ are constant. On the interval $[0,1]$, similar properties of the Bernstein polynomial densities have been exploited for copula modelling and shape constrained regression \citep{Guillotte2012, Chang2007}. The De la Vallée Poussin basis may thus be used to adapt such procedures developed in the unit interval case to the topology of the circle.

\section*{Aknowledgements}

The authors are grateful to the Natural Sciences and Engineering Research Council of Canada (NSERC) for a Discovery grant (S. Guillotte) as well as an Alexander Graham Bell Canada Graduate Scholarship (O. Binette).

\begin{appendices}

\section{Proof of Theorem \ref{thm:strong_consistency}}\label{appendix:proof_consistency}

Let $\bd$ be any space of bounded densities such that for all $f \in \bd$, there 
exists $h\in \bd$ with $\inf_x h(x) > 0$ and 
$\{(1-\alpha)f + \alpha h : 0 < \alpha < 1\} \subset \bd$ 
(the assumption is used only at the end of the proof in \textit{Claim 3}). We also recall the hypothesis $\mathcal{C}_n := T_n(\bd) \subset \mathbb{\bd}$.

\subsection{Some notations}

Let $\|\cdot\|_\infty$ denote the supremum norm, let $\|\cdot\|_1$ denote the 
$L^1$-norm, 
and write $\Lball(f_0, \varepsilon) = \{f \in\bd : \|f-f_0\|_1 < 
\varepsilon\}$, $\eps>0$, for an $L^1$-ball. For a subset $A \subset \bd$ and 
$\delta > 0$, let $N(A,\delta)$ be the minimum number of $L^1$-balls of 
radius $\delta$ and centered in $\bd$ needed to cover $A$. Let 
$\KL(f_0,f)=\int_{\{f_0>0\}} f_0 \log f_0/f \, d\mu$ be the Kullback-Leibler 
divergence between the densities $f_0$ and $f$, and denote $ \KLball(f_0, 
\varepsilon) := \left\{ f \in \bd : \KL(f_0,f) < 
\varepsilon \right\}$. The \emph{Kullback-Leibler support} of $\Pi$ is the set 
of all densities $f_0$ such that
$\Pi ( \KLball(f_0, \varepsilon) ) > 0,$
for all $\varepsilon > 0$. Note that the $\mathfrak{B}$-measurability of 
$\KLball(f_0, \varepsilon)$ is shown in \citet*[Lemma 11]{Barron:1999}.

\subsection{A result of \cite{Xing2009}}

Strong consistency on the Kullback-Leibler support of $\Pi$ is ensured as a particular case of \citet*[Theorem 2]{Xing2009} (see also \cite{walker2004, Lijoi05}) which we state here in the following lemma (their result is stated in terms of the Hellinger distance which is topologically equivalent to the $L^1$-distance). The fact that $\ms$ is a finitely measured compact metric space satisfies the conditions on $\ms$ and $\bd$ stated therein. Therefore, once we show that the lemma applies, all we need is to compute the Kullback-Leibler support. 
\begin{lemma}\label{lem:xing}
	Let $\mathcal{F}_n \subset \bd$, $n \in \mathbb{N}$, be such that $\Pi(\cup_{n} \mathcal{F}_n) = 1$. Suppose there exists $\alpha:(0,1) \rightarrow [0,1)$ such that $\lim_{\delta \rightarrow 0} {\delta}/{(1-\alpha(\delta))} = 0$ and
	\begin{equation}\label{hausdorff_bound}
		\sum_{n=0}^\infty N(\mathcal{F}_n, \delta)^{1-\alpha(\delta)}\Pi(\mathcal{F}_n)^{\alpha(\delta)} < \infty
	\end{equation}
	for every small $\delta > 0$. Then the posterior distribution of $\Pi$ is strongly consistent at every density $f_0$ of its Kullback-Leibler support.
\end{lemma}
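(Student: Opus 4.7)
The plan is to follow the Schwartz--Barron ratio argument for posterior consistency, suitably adapted to exploit the weighted-entropy hypothesis via Hölder-type interpolation. Fix $f_0$ in the Kullback--Leibler support of $\Pi$ and $\varepsilon>0$; write the posterior mass on $A = \Lball(f_0,\varepsilon)^c$ as
$$
\Pi(A\mid X_{1:n}) = \frac{N_n}{D_n}, \qquad N_n = \int_A \prod_{i=1}^n \frac{f(X_i)}{f_0(X_i)}\, d\Pi(f), \qquad D_n = \int_{\bd} \prod_{i=1}^n \frac{f(X_i)}{f_0(X_i)}\, d\Pi(f),
$$
and bound the two factors separately. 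For $D_n$, a classical Barron-style argument---Jensen's inequality applied to the prior restricted to $\KLball(f_0,\beta/4)$ combined with the strong law of large numbers on log-likelihood ratios $n^{-1}\sum_i \log(f(X_i)/f_0(X_i))$---gives that for every $\beta>0$, $D_n \geq e^{-n\beta}$ eventually $P_{f_0}^{(\infty)}$-a.s.

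For $N_n$, the hypothesis $\Pi(\cup_k \mathcal{F}_k) = 1$ yields the decomposition $N_n = \sum_k N_n^{(k)}$ with $N_n^{(k)}$ the integral restricted to $\mathcal{F}_k \cap A$. Fix $\delta \in (0,\varepsilon/2)$ and cover $\mathcal{F}_k \cap A$ by $N_k := N(\mathcal{F}_k,\delta)$ balls $B_{k,1},\dots,B_{k,N_k}$ of $L^1$-radius $\delta$ centered in $\bd$; every density in every such ball is $L^1$-separated from $f_0$ by at least $\varepsilon/2$. A standard Le Cam--Birgé test construction then produces $\phi_{k,j}$ satisfying
$$
E_{f_0}\phi_{k,j} \leq e^{-c_0 n}, \qquad \sup_{f\in B_{k,j}} E_f(1-\phi_{k,j}) \leq e^{-c_0 n},
$$
for a constant $c_0 = c_0(\varepsilon) > 0$ that depends only on $\varepsilon$, not on $k$, $j$, or $n$.

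The heart of the proof is the Hölder interpolation, which is what lets the weighted series of the hypothesis appear. For each ball, Fubini and the likelihood-ratio identity give $E_{f_0}[(1-\phi_{k,j})\int_{B_{k,j}} g_n\,d\Pi(f)] = \int_{B_{k,j}} E_f(1-\phi_{k,j})\,d\Pi(f)$, where $g_n = \prod_i f(X_i)/f_0(X_i)$. Interpolating between this test-based bound and the trivial bound $\int_{B_{k,j}} 1\,d\Pi(f) = \Pi(B_{k,j})$ via Hölder's inequality with conjugate exponents $1/\alpha(\delta)$ and $1/(1-\alpha(\delta))$ produces a factor $\Pi(B_{k,j})^{\alpha(\delta)}$. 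Using $\Pi(B_{k,j}) \leq \Pi(\mathcal{F}_k)$, summing over the $N_k$ balls in the cover, and invoking subadditivity of $x \mapsto x^{\alpha(\delta)}$ together with concavity of $x\mapsto x^{1-\alpha(\delta)}$ then yields
$$
E_{f_0}\!\left[N_n^{(k)}\right] \,\lesssim\, N_k^{1-\alpha(\delta)} \Pi(\mathcal{F}_k)^{\alpha(\delta)} e^{-c_0 n(1-\alpha(\delta))} + e^{-c_0 n}\Pi(\mathcal{F}_k).
$$
Summing over $k$ and invoking the summability hypothesis bounds $E_{f_0}[N_n]$ by a constant times $e^{-c_0 n(1-\alpha(\delta))}$, and Markov's inequality with the first Borel--Cantelli lemma gives $N_n \leq e^{-c' n}$ eventually $P_{f_0}^{(\infty)}$-a.s., for some $c' > 0$.

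Choosing $\beta < c'$ in the denominator bound then yields $\Pi(A\mid X_{1:n}) = N_n/D_n \to 0$ a.s. The main technical obstacle is calibrating $\alpha(\delta)$: one needs $\Pi(\mathcal{F}_k)^{\alpha(\delta)}$ to be weighty enough to damp the potentially enormous entropy $N_k^{1-\alpha(\delta)}$ (summability from the hypothesis), while simultaneously $c_0(1-\alpha(\delta))$ must remain strictly positive and bounded below to beat the free parameter $\beta$ in the denominator. The condition $\lim_{\delta\to 0}\delta/(1-\alpha(\delta)) = 0$ is tailored precisely to this calibration: $\delta$ can be driven to $0$, sharpening the cover and the required prior concentration, while $1-\alpha(\delta)$ stays large enough that the retained exponential rate $c_0(1-\alpha(\delta))$ dominates any prescribed $\beta$. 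A secondary, minor subtlety is the measurability of $\KLball(f_0,\beta/4)$, handled by the Barron--Schervish--Wasserman result cited in the paper.
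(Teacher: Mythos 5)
Your overall architecture (Schwartz--Barron ratio $\Pi(A\mid X_{1:n})=N_n/D_n$, the Kullback--Leibler-support lower bound $D_n\ge e^{-n\beta}$ eventually a.s., covering plus Borel--Cantelli for the numerator) is the right family of arguments, and the denominator half is fine; note also that the paper itself does not prove this lemma but imports it from \cite{Xing2009}, so you are reconstructing an external result. The gap is in the central estimate for the numerator. You claim the test-plus-H\"older step yields $\expec_{f_0}[N_n]\lesssim e^{-c_0 n(1-\alpha(\delta))}$. This cannot hold: by Fubini and independence, $\expec_{f_0}[N_n]=\int_A\bigl(\int_{\{f_0>0\}}f\,d\mu\bigr)^n\,d\Pi(f)$, which equals the constant $\Pi(A)>0$ for every $n$ whenever the densities in $A$ live on $\{f_0>0\}$ (e.g.\ when $f_0$ is bounded away from zero, the main case in this paper). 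Concretely, writing $Y_{k,j}=\int_{B_{k,j}}g_n\,d\Pi$ and splitting $Y_{k,j}=\phi_{k,j}Y_{k,j}+(1-\phi_{k,j})Y_{k,j}$: the second piece does satisfy $\expec_{f_0}[(1-\phi_{k,j})Y_{k,j}]\le\Pi(B_{k,j})e^{-c_0n}$, but the first piece has expectation $\int_{B_{k,j}}\expec_{f_0}[\phi_{k,j}\,g_n(f)]\,d\Pi(f)\approx\int_{B_{k,j}}\expec_f[\phi_{k,j}]\,d\Pi(f)\approx\Pi(B_{k,j})$, with no decay in $n$, since the test \emph{rejects} with high probability under $f\in B_{k,j}$. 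H\"older with exponents $1/(1-\alpha)$ and $1/\alpha$ trades this piece against $\expec_{f_0}[Y_{k,j}^{1/\alpha}]$, a moment of order $1/\alpha>1$ of the integrated likelihood ratio, which is a R\'enyi-type divergence of order exceeding one and is not controlled by any hypothesis (it may be infinite). The alternative of summing the type-I errors over all tests requires $\sum_k N(\mathcal{F}_k,\delta)\,e^{-c_0n}\to0$, i.e.\ control of the \emph{unweighted} total entropy, which is exactly what the weighted condition \eqref{hausdorff_bound} is designed to avoid.

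The repair, which is how \cite{walker2004} and \cite{Xing2009} actually proceed, is to run the argument on fractional moments. One bounds $\expec_{f_0}[N_n^{\alpha}]\le\sum_{k,j}\expec_{f_0}[Y_{k,j}^{\alpha}]$ by subadditivity of $x\mapsto x^{\alpha}$, then proves $\expec_{f_0}[Y_{k,j}^{\alpha}]\le\Pi(B_{k,j})^{\alpha}\bigl(\sup_{g}\int g^{\alpha}f_0^{1-\alpha}\,d\mu\bigr)^{n}$ by the predictive-density martingale recursion, the supremum running over the convex hull of $B_{k,j}$; since $L^1$-balls are convex and uniformly $L^1$-separated from $f_0$ (for $\delta$ small relative to $\varepsilon$), the affinity $\int g^{\alpha}f_0^{1-\alpha}\,d\mu$ is at most $e^{-(1-\alpha)c(\varepsilon)}$, so no tests are needed at all. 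H\"older then enters only through $\sum_j\Pi(B_{k,j})^{\alpha}\le N(\mathcal{F}_k,\delta)^{1-\alpha}\bigl(\sum_j\Pi(B_{k,j})\bigr)^{\alpha}\lesssim N(\mathcal{F}_k,\delta)^{1-\alpha}\Pi(\mathcal{F}_k)^{\alpha}$ (after disjointifying the cover), which produces exactly the summand of \eqref{hausdorff_bound}. Finally Markov is applied to $N_n^{\alpha}$ rather than $N_n$: $\prob(N_n>e^{-2\beta n})\le e^{2\alpha\beta n}\expec[N_n^{\alpha}]$, and the calibration $\delta/(1-\alpha(\delta))\to0$ is what keeps the retained exponent $n(1-\alpha(\delta))c(\varepsilon)$ strictly ahead of $2\alpha\beta n$ and of the $\mathcal{O}(n\delta)$ error terms in the affinity estimate as $\delta\downarrow0$. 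Your intuition about the role of $\alpha(\delta)$ is sound, but the interpolation must happen inside the expectation at the level of the $\alpha$-th moment; on the first moment the exponential decay you need simply is not there.
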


\subsection{Application of the lemma}\label{sec:application_of_the_lemma}

Denote 
$\overline{\mathcal{C}_n}$ the $L^1$-closure of $\mathcal{C}_n=T_n(\bd)$ in 
$\bd$. We apply Lemma \ref{lem:xing} with the disjoint 
$\mathfrak{B}$-measurable sets $\mathcal{F}_n = 
\overline{\mathcal{C}_n}\bigcap_{0\leq k < n}\overline{\mathcal{C}_k}^c$, so 
that $\Pi(\cup_n \mathcal{F}_n) = \Pi(\cup_n \overline{\mathcal{C}_n}) = 1$ and 
$\Pi(\mathcal{F}_n) = \sum_{k \geq 0} \rho(k) \Pi_{k}(\mathcal{F}_n \cap 
\mathcal{C}_k) \le \sum_{k \geq n} \rho(k)$. Let $d_k$ be the strictly increasing integer sequence bounding $\dim (\mathcal{F}_k)$ and such that $\rho(k) < c e^{-C d_k}$, so that we find $\sum_{k \geq n} \rho(k) < c \sum_{k \geq n} e^{-Cd_k} \le c \sum_{k \geq d_n}e^{-Ck}\varpropto e^{-Cd_n}$. Moreover, from Lemma 1 of \cite{Lorentz:1966}, 
$\mathcal{F}_n$ being of dimension at most $d_n$ and contained in an $L^1$-ball of radius 2, we have $N(\mathcal{F}_n, \delta)\le (6/\delta)^{d_n}$. It follows that
\begin{align*}
    \sum_{n=0}^\infty N(\mathcal{F}_n, \delta)^{1-\alpha(\delta)}\Pi(\mathcal{F}_n)^{\alpha(\delta)} \leq D \sum_{n = 0}^{\infty} \exp\left( -d_n\left\{(1-\alpha(\delta))\log(\delta/6)+\alpha(\delta) C\right\} \right)
\end{align*}
for some constant $D > 0$.
Now let $\alpha(\delta) = (1-\delta)^{-\log(\delta)}$, noting that $\lim_{\delta \rightarrow 0} \alpha(\delta) = 1$ and 
$$
\alpha'(\delta) = \alpha(\delta)\left(\frac{\log(\delta)}{1-\delta} - \frac{\log(1-\delta)}{\delta}\right).
$$
Hence, $\lim_{\delta\rightarrow 0} \delta/(1-\alpha(\delta)) = -\left(\lim_{\delta \rightarrow 0} \alpha'(\delta)\right)^{-1} = 0$.
Furthermore, the series~\eqref{hausdorff_bound} 
converges provided$(1-\alpha(\delta))\log(\delta/6)+\alpha(\delta) C > 0$ for $\delta> 0$ sufficiently small. This is indeed the case since $\lim_{\delta \rightarrow 0} C\alpha(\delta) = C > 0$ and $\lim_{\delta \rightarrow 0} (1-\alpha(\delta))\log(\delta/6) = 0$.

\subsection{The Kullback-Leibler support of $\Pi$}

Let $\KL(\Pi)$ denote Kullback-Leibler support of $\Pi$; we show that $\bd \subset \KL(\Pi)$. The proof is divided in the three following claims.

\medskip
\noindent\textit{\textbf{Claim 1:} For all $f\in L^1(\ms)$ we have 
$\|T_n f - f\|_1 \rightarrow 0$.}

To see this, the fact that $T_n$ maps the densities of $ L^1(\ms)$ to 
densities implies that $f\mapsto T_n f$, $f \in L^1(\ms)$, is monotone 
and we get $\|T_n f\|_1\leq \|T_n |f|\|_1\leq \|f\|_1$, for all $n\geq 0$. 
Take $\eps > 0$, we can find $g$ continuous with $\|f - g\|_1 < \eps/3$; this is because the set of continuous functions on $\ms$ is dense in $L^1(\ms)$.
Now by assumption there exists 
$N\geq 0$ such that $\|T_N g - g\|_\infty < \varepsilon/(3\mu(\ms))$, 
and we get $\|T_N f- f\|_1 \leq \|T_N (f - g)\|_1 + \|T_N g - g\|_1 + \|g - 
f\|_1 < \eps$. 

Now let $\bd^+$ be the densities in $\bd$ which are bounded away from zero. 

\medskip
\noindent\textit{\textbf{Claim 2:} $\bd^+ \subset\KL(\Pi)$}. 

We show that for all $f_1 \in \bd^+$, and for all $\eps>0$, there exists an 
$N\geq 0$ and $\delta>0$ such that $\Lball(T_N f_1, 
\delta) \cap \mathcal{C}_N\subset \KLball(f_1, \varepsilon)$. The result will 
then follow from 
\[
\Pi(\KLball(f_1, \varepsilon))=\sum_{k \geq 0} \rho(k) \Pi_{k}(\KLball(f_1, 
\varepsilon) \cap 
\mathcal{C}_k) \geq 
\rho(N)\Pi_{N}(\Lball(T_N f_1, \delta) \cap \mathcal{C}_N) > 0,\] 
since $\rho(N) > 0$ and $\Pi_{N}$ has support $\mathcal{C}_N$. To find such $N$ 
and $\delta$, notice that for all $f\in \bd^+$,
\begin{equation}
 \label{eq:KL}
\KL(f_1,f)\leq 
\|f_1/f\|_\infty\|f_1 -f\|_1\leq \|f_1/f\|_\infty(\|f_1 -T_nf_1\|_1 + 
\|T_nf_1-f\|_1).
\end{equation}
Now put $0<\inf_{x\in 
\ms} f_1(x)=:m\leq M:=\sup_{x\in 
\ms} f_1(x)$. By the first claim, there exists $N \geq 0$ 
such that $\| T_n f_1 - f_1 \|_1 < \frac{m}{8 M}\eps$, for all $n\geq N$. 
Furthermore, since $f\mapsto T_nf$ is monotone and since $\|T_n m-m\|_{\infty} 
\to 0$, we can assume $N$ is large enough so that we also have $\inf_{x\in\ms} 
T_N f_1(x) \geq \inf_{x\in\ms} T_N m(x)  \geq m/2$. Since 
$\mathcal{C}_N=T_N(\bd)\subset \bd$ and is finite 
dimensional, $\|\cdot\|_{\infty}$ is finite and equivalent to $\|\cdot\|_1$ on 
$\mathcal{C}_N$ and we can find $0 < \delta < \frac{m}{8 M}\eps$ such that 
$\Lball(T_N f_1, \delta) \cap \mathcal{C}_N \subset \Sball(T_N f_1, m/4)\cap 
\mathcal{C}_N$. Now for any $f \in \Lball(T_N 
f_1, \delta) \cap \mathcal{C}_N$, the quantity $\|f_1 / f\|_\infty 
\le 4M/m$, so that by plugging $N$ in~\eqref{eq:KL} we get $\KL(f_1,f)<\eps$.

\medskip
\noindent\textit{\textbf{Claim 3:} $\bd\setminus \bd^+ \subset\KL(\Pi)$}.

Let $f_0\in \bd\setminus \bd^+$ and let $0 < \varepsilon < 6$. By assumption 
there is an $h \in \bd^+$ such that 
$\{(1-\alpha) f_0 + \alpha h : 0 < \alpha < 1\} \subset \bd$. Now take $f_1 = 
\frac{f_0 + \gamma h}{1+\gamma} \in 
\bd^+$, with $\gamma = \eps/6$, so $f_0<(1+\gamma)f_1$. We use the 
following result from \citet*[Lemma 5.1]{Ghosal:1999a}. 
\begin{lemma}
If $f_0$ and $f_1$ are densities with $f_0 
\le C f_1$, for some $C\geq 1$, then for any density $f$,
\[
\KL(f_0, f) \le (C+1)\log C + C\left[\KL(f_1, f) + \sqrt{\KL(f_1, f)} 
\right].
\]
\end{lemma}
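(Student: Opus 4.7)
The plan is to split $\KL(f_0, f)$ into two pieces by inserting $f_1$:
\[
\KL(f_0, f) = \KL(f_0, f_1) + \int f_0 \log(f_1/f)\, d\mu.
\]
The first term is bounded trivially: since $f_0/f_1 \le C$ pointwise and $f_0$ integrates to one, $\KL(f_0, f_1) \le \log C$, which is already below $(C+1)\log C$ for $C \ge 1$.

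For the remaining piece $\int f_0 \log(f_1/f)\, d\mu$, I would split the integrand according to the sign of $\log(f_1/f)$. On $\{f_1 \ge f\}$ the integrand is nonnegative, so the hypothesis $f_0 \le C f_1$ yields
\[
\int f_0 \log(f_1/f)\indic\{f_1 \ge f\}\, d\mu \le C \int f_1 \log_+(f_1/f)\, d\mu,
\]
where $\log_+(x) = \max(\log x, 0)$. On $\{f_1 < f\}$ the integrand is nonpositive and may be dropped when seeking an upper bound. Writing $\int f_1 \log_+(f_1/f)\, d\mu = \KL(f_1, f) + \int f_1 \log_-(f_1/f)\, d\mu$ with $\log_-(x) = \max(-\log x, 0)$, the task reduces to controlling the single integral $\int f_1 \log(f/f_1)\indic\{f > f_1\}\, d\mu$.

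This last integral is where the $\sqrt{\KL(f_1, f)}$ term enters, and I view it as the only nontrivial step. I would combine the elementary pointwise inequality $\log y \le y - 1$ (applied with $y = f/f_1$) with Pinsker's inequality:
\[
\int f_1 \log(f/f_1)\indic\{f > f_1\}\, d\mu \le \int (f - f_1)_+\, d\mu = \tfrac{1}{2}\|f_1 - f\|_1 \le \sqrt{\tfrac{1}{2}\KL(f_1, f)}.
\]
Assembling everything yields $\KL(f_0, f) \le \log C + C\KL(f_1, f) + C\sqrt{\KL(f_1, f)/2}$, which implies the stated bound since $\log C \le (C+1)\log C$ and $\sqrt{\KL/2} \le \sqrt{\KL}$. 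The main ingredient is really the $L^1$-to-$\sqrt{\KL}$ conversion via Pinsker; the rest is bookkeeping based on the pointwise domination $f_0 \le C f_1$.
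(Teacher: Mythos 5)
Your proof is correct. Note that the paper itself does not prove this lemma; it simply cites it as Lemma~5.1 of Ghosal, Ghosh, and Ramamoorthi (1999), so there is no internal proof to compare against. Your self-contained argument is the natural one: decompose $\KL(f_0,f)=\KL(f_0,f_1)+\int f_0\log(f_1/f)\,d\mu$, bound the first term by $\log C$ via the pointwise domination $f_0\le Cf_1$, drop the negative part of the second integrand and dominate the positive part by $Cf_1\log_+(f_1/f)$, then control $\int f_1\log_-(f_1/f)\,d\mu$ by the elementary inequality $\log y\le y-1$ followed by the identity $\int (f-f_1)_+\,d\mu=\tfrac12\|f-f_1\|_1$ and Pinsker. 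Every step is valid, and in fact you obtain the slightly sharper bound $\log C + C\KL(f_1,f)+C\sqrt{\KL(f_1,f)/2}$, which is then relaxed to the form stated in the lemma. The one thing worth making explicit is that if $\KL(f_1,f)=\infty$ the inequality is vacuous, and if $\KL(f_1,f)<\infty$ your bounds show $\int f_0\log_+(f_1/f)\,d\mu<\infty$, so the splitting of $\KL(f_0,f)$ is legitimate (the value $-\infty$ for the negative part only helps an upper bound). This is a minor measurability/integrability remark, not a gap.
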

Here $(2+\gamma)\log(1+\gamma)<\eps/2$. By the second claim and the above 
lemma, there exists $\delta > 0$ and $N\geq 0$ such that for $f \in \Lball(T_N 
f_1, \delta) \cap \mathcal{C}_N$, we have $\KL(f_0,f)< \eps$.

\section{Proof of Theorem \ref{thm:adaptative_convergence}}\label{appendix:proof_adaptative}

We apply a particular case of \citep[Theorem 1]{Xing2011} which is stated in the following lemma. Here $H(f_0, f)^2 = \int \left(\sqrt{f} - \sqrt{f_0}\right)^2 d\mu$ is the squared Hellinger distance and $N(\varepsilon, \mathcal{F}; H)$ is the covering number of $\mathcal{F}$ with respect to the Hellinger distance: it is the minimum number of Hellinger balls of radius $\varepsilon$ necessary to cover $\mathcal{F}$.

\begin{lemma}[\cite{Xing2011}]
    Let $\varepsilon_n$ and $\tilde \varepsilon_n$ be positive sequences such that $n\min\{\varepsilon_n^2, \tilde \varepsilon_n^2\} \rightarrow \infty$ as $n \rightarrow \infty$. Suppose there exists subsets $\mathcal{F}_j$, $j \in \mathbb{N}$, of $\mathbb{F}$ with $\Pi(\cup_j \mathcal{F}_j) = 1$ and constants $c_1 > 0$, $c_2 > 0$, $0 \leq \alpha < 1$ such that 
    \begin{equation}\label{eq:Xing_c1}
        \sum_{n=1}^\infty e^{-c_1 n \tilde \varepsilon_n^2} \sum_{j=1}^\infty N(\tilde \varepsilon_n, \mathcal{F}_j; H)^{1-\alpha} \Pi(\mathcal{F}_j)^\alpha < \infty
    \end{equation}
    and
    \begin{equation}\label{eq:Xing_c2}
        \Pi\left(\left\{ f \in \mathbb{F} : H(f_0, f)^2 \|f_0/f\|_\infty^{1/2} \leq \varepsilon_n^2 \right\}\right) \geq e^{- n\varepsilon_n^2 c_2}
    \end{equation}
    for all large $n$.
    Then the posterior distribution of $\Pi$ contracts around $f_0$ at the rate $\max \{\varepsilon_n, \tilde \varepsilon_n\}$.
\end{lemma}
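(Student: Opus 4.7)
The plan is to apply the Xing (2011) lemma stated at the start of Appendix \ref{appendix:proof_adaptative} with $\tilde\varepsilon_n=\varepsilon_n=(n/\log n)^{-\beta/(2\beta+d)}$ and disjoint sieves $\mathcal{F}_j=\overline{T_j(\mathbb{F})}\setminus\bigcup_{k<j}\overline{T_k(\mathbb{F})}$ (closures in $L^1$), which are $\mathfrak{B}$-measurable and cover the $\Pi$-full set $\bigcup_j T_j(\mathbb{F})$. Throughout, the hypothesis $\|\log f_0\|_\infty<\infty$ gives $0<m\leq f_0\leq M<\infty$; when $\|f-f_0\|_\infty\leq m/2$, we have $\|f_0/f\|_\infty^{1/2}\leq\sqrt{2M/m}$ and $H(f_0,f)^2\leq\|f-f_0\|_1\leq\mu(\mathbb{M})\|f-f_0\|_\infty$ (using the standard inequality $(\sqrt{f_0}-\sqrt{f})^2\leq|\sqrt{f_0}-\sqrt{f}|(\sqrt{f_0}+\sqrt{f})=|f_0-f|$). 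The Xing prior-mass condition \eqref{eq:Xing_c2} therefore reduces to showing $\Pi(\{f:\|f-f_0\|_\infty\leq c\varepsilon_n\})\geq e^{-c_2 n\varepsilon_n^2}$ for some small $c>0$ and all large $n$.

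For the prior-mass bound I would take $n^*=n^*(n)$ the smallest integer with $\|T_{n^*}f_0-f_0\|_\infty\leq c\varepsilon_n/2$, so by the hypothesis $\|T_n f_0 - f_0\|_\infty=\mathcal{O}(n^{-\beta})$ one has $n^*\asymp\varepsilon_n^{-1/\beta}$ and by A2 $d_{n^*}\asymp\varepsilon_n^{-d/\beta}$. Setting $\eta=\tfrac{1}{2}\min(c\varepsilon_n,\varepsilon_0/d_{n^*})<\varepsilon_0/d_{n^*}$, the triangle inequality yields the inclusion $\{f\in T_{n^*}(\mathbb{F}):\|f-T_{n^*}f_0\|_\infty\leq\eta\}\subset\{\|f-f_0\|_\infty\leq c\varepsilon_n\}$. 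The concentration hypothesis \eqref{eq:prior_concentration} gives $\Pi_{n^*}(\cdots)\geq(\eta/d_{n^*})^{\kappa d_{n^*}}$ and A3 supplies $\rho(n^*)\geq e^{-A d_{n^*}\log d_{n^*}}$ for some $A>0$, so taking logs gives $-\log\bigl(\rho(n^*)\Pi_{n^*}(\cdots)\bigr)\lesssim (A+2\kappa)\,d_{n^*}\log d_{n^*}$. The central computation $d_{n^*}\log d_{n^*}\asymp (n/\log n)^{d/(2\beta+d)}\log n \asymp n^{d/(2\beta+d)}(\log n)^{2\beta/(2\beta+d)}\asymp n\varepsilon_n^2$ then delivers \eqref{eq:Xing_c2}.

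For the entropy condition \eqref{eq:Xing_c1}, A3 gives $\Pi(\mathcal{F}_j)\leq\rho(j)\leq e^{-A'd_j\log d_j}$ for some $A'>0$. Since $\mathcal{F}_j$ lies in the $d_j$-dimensional set $T_j(\mathbb{F})\subset\mathbb{F}$, combining $H^2\leq\|\cdot\|_1$ with the Lorentz (1966) covering bound used in Appendix \ref{appendix:proof_consistency} yields $N(\tilde\varepsilon_n,\mathcal{F}_j;H)\leq (6/\tilde\varepsilon_n^2)^{d_j}$. The $j$th summand is thus bounded by $\exp\bigl(d_j\{(1-\alpha)\log(6/\tilde\varepsilon_n^2)-\alpha A'\log d_j\}\bigr)$. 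Picking $\alpha\in(0,1)$ close enough to $1$ that $2\beta(1-\alpha)/(\alpha A')<d$, the exponent is negative once $d_j$ exceeds a threshold of order $\tilde\varepsilon_n^{-2(1-\alpha)/(\alpha A')}$, giving a geometrically-convergent tail; the finitely many indices below this threshold contribute at most $\exp\bigl(C\tilde\varepsilon_n^{-2(1-\alpha)/(\alpha A')}\log(1/\tilde\varepsilon_n)\bigr)$, which by the choice of $\alpha$ is $o(n\tilde\varepsilon_n^2)$. Consequently $\sum_n e^{-c_1 n\tilde\varepsilon_n^2}\cdot(\text{inner sum})<\infty$ for every $c_1>0$, which verifies \eqref{eq:Xing_c1} and concludes the proof via Xing's lemma.

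The main obstacle is the three-way balance between the approximation error $(n^*)^{-\beta}$, the prior weight $\rho(n^*)\geq e^{-A d_{n^*}\log d_{n^*}}$, and the finite-dimensional concentration $(\eta/d_{n^*})^{\kappa d_{n^*}}$: the rate $(n/\log n)^{-\beta/(2\beta+d)}$ is exactly the one for which $d_{n^*}\log d_{n^*}\asymp n\varepsilon_n^2$ under the choice $n^*\asymp\varepsilon_n^{-1/\beta}$, and any other coupling breaks this match. A secondary subtlety is that when $d>\beta$ one must cap $\eta$ at $\varepsilon_0/(2d_{n^*})$ to stay in the range of validity of \eqref{eq:prior_concentration}; since $\varepsilon_0/d_{n^*}\asymp\varepsilon_n^{d/\beta}\leq c\varepsilon_n$ in that regime, the triangle-inequality inclusion is preserved and the bound $-\log\Pi_{n^*}(\cdots)\lesssim \kappa d_{n^*}\log d_{n^*}$ is unchanged (only $\log(1/\varepsilon_n)$ is replaced by $\log d_{n^*}$, both of order $\log n$), so the rate is unaffected.
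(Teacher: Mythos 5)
Your proposal does not prove the statement in question. The statement is the general contraction-rate lemma imported from Xing (2011): given the summability/entropy condition \eqref{eq:Xing_c1} and the prior-mass condition \eqref{eq:Xing_c2}, the posterior contracts at rate $\max\{\varepsilon_n,\tilde\varepsilon_n\}$. Your argument opens with ``the plan is to apply the Xing (2011) lemma'' and then verifies its two hypotheses for the specific De la Vallée Poussin sieve prior --- that is, you have (quite competently) sketched the proof of Theorem \ref{thm:adaptative_convergence}, which is exactly what the paper does in Appendix \ref{appendix:proof_adaptative}, but you have assumed the very lemma you were asked to establish. Using the lemma as a black box cannot constitute a proof of the lemma. (For the record, the paper itself offers no proof either; it cites the result as a particular case of Theorem 1 of Xing (2011).)

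A genuine proof of the lemma would have to work at the level of the posterior itself. Concretely: (i) the prior-mass condition \eqref{eq:Xing_c2} must be converted into an almost-sure lower bound of the form $\int \prod_{i=1}^n (f/f_0)(X_i)\,\Pi(df) \geq e^{-c\,n\varepsilon_n^2}$ for the denominator of the posterior; the unusual quantity $H(f_0,f)^2\|f_0/f\|_\infty^{1/2}$ appears precisely because it dominates $\KL(f_0,f)$ and the second moment of $\log(f_0/f)$ (a Wong--Shen-type inequality), which is what a Borel--Cantelli or Chebyshev argument on the log-likelihood ratio requires. (ii) The numerator over $\{f : H(f_0,f)\geq L\max\{\varepsilon_n,\tilde\varepsilon_n\}\}$ must be controlled using \eqref{eq:Xing_c1}, either by constructing exponentially consistent tests over a Hellinger cover of each $\mathcal{F}_j$ or by Walker's martingale/Hausdorff-entropy argument --- the exponent $\alpha$ interpolating between covering numbers and prior masses is the signature of the latter route. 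Neither step appears in your write-up, so as a proof of the stated lemma it is entirely missing; as a proof of Theorem \ref{thm:adaptative_convergence} it closely tracks the paper's Appendix \ref{appendix:proof_adaptative} (the paper uses $\tilde\varepsilon_n = n^{-\gamma}$ with $\beta/(2\beta+d)<\gamma<1/2$ rather than your choice $\tilde\varepsilon_n=\varepsilon_n$, and routes the entropy bound through Lemma \ref{lem:sum_order}, but the balance $d_{k_n}\log d_{k_n}\asymp n\varepsilon_n^2$ you identify is indeed the crux there).
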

Here we let $\tilde \varepsilon_n = n^{-\gamma}$ for $\gamma$ satisfying $\beta/(2\beta + d) < \gamma < 1/2$, and $\varepsilon_n = (n/\log(n))^{-\beta/(2\beta + d)}$. The two conditions \eqref{eq:Xing_c1} and \eqref{eq:Xing_c2} can be independently verified.

\subsection{Verification of condition \eqref{eq:Xing_c1}}

This follows along the lines of Section 3.1 in \cite{Xing2008_arxiv}.
By assumption \textbf{A3}, there exists a constant $C > 0$ such that $\rho(n) \leq e^{-C d_n \log(d_n)}$. 
As in the proof of Theorem \ref{thm:strong_consistency}, we let $\mathcal{F}_j = \overline{\mathcal{C}_j}\bigcap_{0\leq k < j}\overline{\mathcal{C}_k}^c$ with $\mathcal{C}_j = T_j(\bd)$. Now using \textbf{A2}, $\Pi(\mathcal{F}_j) \leq \sum_{k \geq j} \rho(k) \leq  \sum_{k \geq d_j}e^{-Ck \log (k)}$ is bounded above by $ L e^{-C d_j \log(d_j)}$, $L = 2^C/(2^C -1)$, when $j \geq 2$. 
Since $H(f, g)^2 \leq \int |f-g| \,d\mu$, we have that $N(\tilde \varepsilon_n, \mathcal{F}_j;H) \leq N(\tilde \varepsilon_n^2, \mathcal{F}_j) \leq (6/\tilde \varepsilon_n^2)^{d_j}$ where the last inequality is derived as in Appendix \ref{sec:application_of_the_lemma}. 

Now let $0 \leq \alpha < 1$ be sufficiently close to $1$ so that $ C \alpha (1-2\gamma) \geq 2\gamma(1-\alpha)$. By Lemma \ref{lem:sum_order}, there exists $D > 0$ with $\sum_{j=1}^\infty \left(\frac{j^{C\alpha}}{6^{1-\alpha}n^{2\gamma(1-\alpha)}}\right)^{-j} \leq \exp \left(D n^{2\gamma(1-\alpha)/(C \alpha)} \right)$ for every large $n$. We therefore obtain
    \begin{align*}
        \sum_{j=1}^\infty N(\tilde \varepsilon_n, \mathcal{F}_j;H)^{1-\alpha} \Pi(\mathcal{F}_j)^\alpha
        & \leq L^\alpha\sum_{j=1}^\infty (6 n^{2\gamma})^{d_j(1-\alpha)} e^{-C d_j \log(d_j) \alpha}\\
        & \leq L^\alpha\sum_{j=1}^\infty (6 n^{2\gamma})^{j(1-\alpha)} e^{-C j \log(j) \alpha}\\
        & = L^\alpha \sum_{j=1}^{\infty} \left( \frac{j^{C\alpha}}{6^{1-\alpha}n^{2\gamma(1-\alpha)}} \right)^{-j} \leq L^\alpha \exp \left(D n^{2\gamma(1-\alpha)/(C \alpha)} \right).
    \end{align*}
    Taking $c_1 > D$ and since $(1-2\gamma) \geq 2\gamma(1-\alpha)/(C \alpha)$, it follows that 
    \begin{align*}
        &\sum_{n=1}^\infty e^{-n \tilde \varepsilon_n^2 c_1} \sum_{j=1}^\infty N(\tilde \varepsilon_n, \mathcal{F}_j)^{1-\alpha} \Pi(\mathcal{F}_j)^\alpha\\
        &\leq L^\alpha \sum_{n=1}^\infty \exp\left(D n^{2\gamma(1-\alpha)/(C \alpha)} - c_1 n^{1-2\gamma}\right) < \infty.
    \end{align*}

\subsection{Verification of condition \eqref{eq:Xing_c2}}

This follows along the lines of the proof of Theorem 2.3 in \cite{Ghosal2001} and of the proof of Theorem 2 in \cite{Kruijer2008}. Again $\varepsilon_n = (n/\log(n))^{-\beta/(2\beta + d)}$ and we let $k_n$ be an integer sequence such that $k_n \asymp \varepsilon_n^{-1/\beta}$. The first step of the proof is to show that for some constant $L_1 > 0$ and for $n$ sufficiently large,
\begin{align}\label{eq:B2_first_step_goal}
    \left\{ f : H(f_0, f)^2 \|f_0/f\|_\infty^{1/2} \leq L_1\varepsilon_n^2\right\}
    \supset  \left\{ f \in T_{k_n}(\mathbb{F}) : \|T_{k_n} f_0 - f\|_\infty \leq \varepsilon_n \right\}.
\end{align}
The probability of the set on the right hand side will then be lower bounded through \eqref{eq:prior_concentration}.

Since $\|\log f_0\|_\infty < \infty$ by assumption, there exists constants $m$, $M$ with $0 < m < f_0 < M$. Furthermore, if $f\in \mathbb{F}$ is such that $\|T_n f_0 - f\|_\infty < \inf T_n f_0$, then
$$
    \|f_0/f\|_\infty \leq \frac{M}{(\inf T_n f_0) - \|T_n f_0 - f\|_\infty}.
$$
By assumption \textbf{A1} and the resulting positivity of $T_n$, $\inf T_n f_0 \geq T_n(m) \rightarrow m$ as $n \rightarrow \infty$. Hence for $n$ sufficiently large that $\inf T_n f_0 > m/2$ and if $\|T_n f_0 - f\|_\infty < m/4$, then
$$
    \|f_0/f\|_\infty \leq \frac{M}{m/2 - \|T_n f_0 - f\|_\infty} \leq 4M/m.
$$
Now, since we are integrating with respect to the finite measure $\mu$, we also have
\begin{align*}
    H(f_0, f)^2 
    & \leq \int \left(\sqrt{f} - \sqrt{f_0}\right)^2\left(1+\sqrt{f/f_0}\right)^2d\mu\\
    & \leq m^{-1}\int (f-f_0)^2d\mu\\
    & \leq m^{-1}\mu(\mathbb{M})\|f-f_0\|_\infty^2.
\end{align*}
Furthermore, $\|f-f_0\|_\infty \leq \|T_{k_n f_0} - f_0\|_\infty + \|T_{k_n}f_0 - f\|_\infty$ with $\|T_{k_n f_0} - f_0\|_\infty = \mathcal{O}(k_n^{-1/\beta})$ and $k_n^{-\beta} \asymp \varepsilon_n$. Therefore, taking $n$ sufficiently large that $\inf T_{k_n} f_0 > m/2$ and $\varepsilon_n \leq m/4$, we have that $\|T_{k_n}f_0 - f\|_\infty \leq \varepsilon_n$ implies
$$
    H(f_0, f) \|f_0/f\|_\infty^{1/4} \leq L_2 (k_n^{-\beta} + \varepsilon_n) \leq L_3 \varepsilon_n
$$
for some constants $L_2$ and $L_3$. This proves \eqref{eq:B2_first_step_goal}.

Now for $n$ sufficiently large, we have $\varepsilon_n^{1+d/\beta} \leq \varepsilon_n$ and $\varepsilon_n^{1+d/\beta} \leq \varepsilon_0 / d_{k_n}$, where $\varepsilon_0$ is a fixed constant in Theorem \ref{thm:adaptative_convergence}. Hence using \eqref{eq:prior_concentration} we find
\begin{align*}
    \Pi\left(\left\{ f \in T_{k_n}(\mathbb{F}) : \|T_{k_n} f_0 - f\|_\infty \leq \varepsilon_n \right\}\right)
    &\geq \Pi\left(\left\{ f \in T_{k_n}(\mathbb{F}) : \|T_{k_n} f_0 - f\|_\infty \leq \varepsilon_n^{1+d/\beta} \right\}\right)\\
    & \geq \rho(k_n) \left(\frac{\varepsilon_n^{1+d/\beta}}{d_{k_n}}\right)^{\kappa d_{k_n}}.
\end{align*}
Combining assumptions \textbf{A2} and \textbf{A3}, there exist positive constants $A$ and $B$ such that 
$$
\rho(k_n) \geq \left(\frac{1}{d_{k_n}}\right)^{A d_{k_n}}\quad \text{and}\quad d_{k_n} \leq B \varepsilon_n^{-d/\beta}.
$$
It follows that for $n$ sufficiently large and taking $A > \kappa$,
\begin{align*}
    \rho(k_n) \left(\frac{\varepsilon_n^{2+d/\beta}}{d_{k_n}}\right)^{\kappa d_{k_n}}
    & \geq \left(\frac{1}{d_{k_n}}\right)^{A d_{k_n}} \left(\frac{\varepsilon_n^{1+d/\beta}}{d_{k_n}}\right)^{\kappa d_{k_n}}\\
    & \geq \left(\frac{\varepsilon_n^{1+2d/\beta}}{B}\right)^{AB \varepsilon_n^{-d/\beta}}\\
    & \geq \exp\left\{ - c_2 n \varepsilon_n^2 \right\}
\end{align*}
for some positive constant $c_2 >0$. This finishes the proof of Theorem \ref{thm:adaptative_convergence}.

\section{Auxiliary results}

\begin{lemma}\label{lem:approx}
Let $\mu$ be a finite measure on the compact metric space $(\ms,d)$. For each 
$n\geq 0$, $d_n\geq 0$, let $\{\phi_{i,n}\}_{i=0}^{d_n}$ be a set of densities 
(with respect to $\mu$) and let $\{R_{i,n}\}_{i=0}^{d_n}$ be a partition of 
$\ms$. Let $T_nf=\sum_{i=0}^{d_n} \left(\int_{R_{i,n}}f\, d\mu\right) 
\phi_{i,n}$, $f\in L^1(\ms)$. If the three following conditions hold:
\begin{enumerate}[(i)]
\item $\max_i \diam(R_{i,n}) \to 0$, as $n\to \infty$, where 
$\diam(R_{i,n})=\sup\{d(x,y): x,y \in R_{i,n}\}$,
\item for all $\delta > 0$, $\sum_{
\{i : d(x, R_{i,n}) \geq \delta\}} 
\mu(R_{i,n}) \phi_{i,n}(x) \rightarrow 
0$, uniformly in $x \in \ms$, where $d(x, R_{i,n}) := \inf\{d(x,y): y \in 
R_{i,n}\}$,
\item $\sum_{i=0}^{d_n}\mu(R_{i,n}) \phi_{i,n} = 1$, so that $T_nc=c$, for all 
$c\in \reals$,
\end{enumerate}
then we have $\|T_n f -f\|_\infty \to 0$ for every continuous density 
$f$.
\end{lemma}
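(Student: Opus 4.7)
The plan is to adapt the classical Korovkin-type argument for positive linear operators to this partition-kernel setting. The key observation is that assumption $\textit{(iii)}$ gives us $f(x) = \sum_{i=0}^{d_n} \mu(R_{i,n})\phi_{i,n}(x) f(x)$ for every $x \in \ms$, so we may write
\[
T_n f(x) - f(x) = \sum_{i=0}^{d_n} \phi_{i,n}(x) \int_{R_{i,n}} \bigl(f(y) - f(x)\bigr)\, d\mu(y),
\]
which reduces the problem to bounding a weighted average of oscillations of $f$.

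Since $(\ms,d)$ is compact and $f$ is continuous, $f$ is bounded, say $\|f\|_\infty \le M$, and uniformly continuous. Fix $\eps > 0$ and choose $\delta > 0$ such that $d(x,y) < 2\delta$ implies $|f(x) - f(y)| < \eps$. By $\textit{(i)}$, there exists $n_0$ such that for all $n \geq n_0$, $\max_i \diam(R_{i,n}) < \delta$. I would then split the sum above according to whether $d(x, R_{i,n}) < \delta$ or $d(x, R_{i,n}) \geq \delta$ and bound each piece separately.

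For the near indices, if $y \in R_{i,n}$ and $d(x, R_{i,n}) < \delta$, then $d(x,y) \le d(x, R_{i,n}) + \diam(R_{i,n}) < 2\delta$, so $|f(y) - f(x)| < \eps$, and this partial sum is bounded by
\[
\eps \sum_{i=0}^{d_n} \mu(R_{i,n})\phi_{i,n}(x) = \eps
\]
by $\textit{(iii)}$, uniformly in $x$. For the far indices, I use the crude bound $|f(y) - f(x)| \le 2M$ together with $\textit{(ii)}$:
\[
\sum_{i : d(x, R_{i,n}) \geq \delta} \phi_{i,n}(x) \int_{R_{i,n}} |f(y) - f(x)|\, d\mu(y) \le 2M \sum_{i : d(x, R_{i,n}) \geq \delta} \mu(R_{i,n})\phi_{i,n}(x),
\]
and the right-hand side tends to $0$ uniformly in $x$ by assumption $\textit{(ii)}$. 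Combining the two estimates yields $\limsup_n \|T_n f - f\|_\infty \le \eps$, and since $\eps$ was arbitrary, the conclusion follows.

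I do not expect a genuine obstacle: the only subtle point is the geometric step linking the near-index condition $d(x, R_{i,n}) < \delta$ to the pointwise bound $d(x,y) < 2\delta$ for $y \in R_{i,n}$, which is exactly where condition $\textit{(i)}$ is needed (to control $\diam(R_{i,n})$). Everything else is a standard splitting argument, and the fact that $f$ is assumed to be a density plays no role beyond providing continuity and boundedness.
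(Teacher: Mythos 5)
Your proof is correct and follows essentially the same route as the paper's: both use (iii) to express $T_n f - f$ as a weighted sum of local oscillations of $f$, split indices according to $d(x, R_{i,n})$ being small or large, handle the near indices via uniform continuity plus (i), and dispatch the far indices with (ii). The only differences are cosmetic bookkeeping of constants ($\eps$ vs.\ $\eps/2$, $\delta$ vs.\ $\delta/2$).
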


\begin{proof}
Let $f$ be a 
(uniformly) continuous density on $\ms$ and let $\varepsilon > 0$. From 
\textit{(iii)} we have $|T_n f (x) - f(x)|\le \sum_{i=0}^{d_n} \int_{R_{i,n}} 
|f(y)-f(x)|\,\mu(dy) \phi_{i,n}(x)$. Take $\eps>0$, there exists $\delta > 0$, 
such that $|f(y) - f(x)| < \varepsilon /2$, for all  $y\in 
\text{B}_d(x,\delta)$. Using \textit{(i)}, let $N\geq 0$ be chosen so that 
$\max_i \diam(R_{i,n})< \delta/2$, for all $n\geq N$. Notice that for $n\geq 
N$, we have $\ms=\text{B}_d(x,\delta)\cup_{\{i: d(x, R_{i,n}) \geq 
\delta/2\}}R_{i,n}$; this follows from the fact that $d(x,y)\leq 
d(x,S)+\diam(S)$, for all $y\in S \subset \ms$. Therefore,
\begin{align*}
|T_n f (x) - f(x)| &\leq \sum_{i=0}^{d_n}\int_{R_{i,n}} 
|f(y)-f(x)|\,\mu(dy) \phi_{i,n}(x), \\
&\leq 
\frac{\eps}{2}\sum_{i=0}^{d_n}\int_{R_{i,n}\cap 
\text{B}_d(x,\delta)}\mu(dy) \phi_{i,n}(x)\\
&\qquad+2\|f\|_{\infty}\sum_{\{i: d(x, 
R_{i,n}) \geq \delta/2\}}\int_{R_{i,n}}\mu(dy) \phi_{i,n}(x),\\
&<\eps, \quad x \in \mathbb{M},
\end{align*}
follows from \textit{(iii)} and \textit{(ii)} provided $N$ is further chosen large enough.
\end{proof}

\begin{lemma}\label{lem:sum_order}
    If $a, b \in (0,\infty)$, then as $n \rightarrow \infty$ we have
    $$
        \log \sum_{j=1}^{\infty}  \left(\frac{j^b}{n^a}\right)^{-j}= \mathcal{O}\left(n^{a/b}\right).
    $$
\end{lemma}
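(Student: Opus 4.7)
The plan is to study the summand $T_j(n) := (j^b/n^a)^{-j} = \exp(\phi_n(j))$ with $\phi_n(j) = j(a\log n - b\log j)$, treated as a function of a real variable $j > 0$. Since $\phi_n''(j) = -b/j < 0$, the function is concave; setting $\phi_n'(j) = a\log n - b\log j - b = 0$ gives the unique maximizer $j^\ast := n^{a/b}/e$, with maximum value $\phi_n(j^\ast) = b j^\ast = (b/e)\, n^{a/b}$. This already identifies the correct order on the logarithmic scale; the remaining work is to show that the full sum is not much larger than its maximal term.

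First I would rescale via $k = j/j^\ast$, so that $\phi_n(j) = b j^\ast \psi(k)$ with $\psi(k) := k(1 - \log k)$, a concave function satisfying $\psi(k) \leq \psi(1) = 1$. Thus every term obeys the uniform bound $T_j(n) \leq \exp((b/e)\, n^{a/b})$. Next I would split the sum at $j = \lceil 2e\, j^\ast \rceil$. In the range $1 \leq j \leq 2e\, j^\ast$ (at most $O(n^{a/b})$ indices) I would apply the above uniform bound, contributing $O(n^{a/b})\exp((b/e)\, n^{a/b})$.

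For the tail $j > 2e\, j^\ast$, i.e.\ $k > 2e$, I would exploit that $\log k > 1 + \log 2$, hence $\psi(k) = k(1-\log k) \leq -k\log 2$. This yields $T_j(n) \leq \exp(-b j^\ast k \log 2) = 2^{-bj}$, so the tail sum is bounded by $\sum_{j \geq 1} 2^{-bj} = O(1)$, independently of $n$. Combining the two ranges gives
\[
  S_n := \sum_{j=1}^\infty T_j(n) \leq C_1\, n^{a/b}\, \exp\bigl((b/e)\, n^{a/b}\bigr) + C_2,
\]
and taking logarithms yields $\log S_n \leq (b/e)\, n^{a/b} + (a/b)\log n + O(1) = O(n^{a/b})$, as required.

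No genuinely hard step arises; the only point needing care is the tail bound, where one must ensure the geometric factor $2^{-bj}$ (and hence the bound on the tail sum) is independent of $n$. The rescaling $k = j/j^\ast$ makes this transparent by reducing everything to the universal shape of $\psi$.
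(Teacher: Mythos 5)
Your proof is correct and follows essentially the same approach as the paper: both identify the maximum summand at $j^* = n^{a/b}/e$ with value $\exp((b/e)n^{a/b})$, split the sum at a cutoff comparable to $n^{a/b}$, bound the bulk by (number of terms)$\times$(maximal term), and show the tail is an $O(1)$ geometric series. The only superficial difference is the choice of cutoff (you use $2e\,j^*$, the paper uses $n^{\gamma/b}$ for $\gamma>a$) and that your rescaling via $\psi(k)=k(1-\log k)$ makes the tail's geometric decay a bit more transparent.
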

\begin{proof}
    Let $k_n = n^{\gamma/b}$ for some $\gamma > a$ and write
    \begin{align*}
        \sum_{j=1}^{\infty} \left(\frac{j^b}{n^a}\right)^{-j} \leq \sum_{j > k_n}\left(\frac{j^b}{n^a}\right)^{-j}  +   k_n \max_{1 \leq j \leq k_n}\left(\frac{j^b}{n^a}\right)^{-j}.
    \end{align*}
    The second term on the right hand side is easily seen to be bounded by $k_n\exp\left( b n^{a/b}/e \right)$ and the first term is bounded by
    $
        \sum_{j=0}^{\infty} \left(\frac{k_n^b}{n^a} \right)^{-j} = \frac{1}{1-n^{a-\gamma}} \xrightarrow{n \rightarrow \infty} 1.
    $
    Taking the logarithm and neglecting low order terms then yields the result.
\end{proof}

\end{appendices}
\bibliographystyle{chicago}
\bibliography{biblio}
\end{document}